\newcommand{\np}[1]{\numprint{#1}}
\newcolumntype{H}{>{\setbox0=\hbox\bgroup}c<{\egroup}@{}} 
\newtheorem{lemma}{Lemma}
\newtheorem{proposition}{Proposition}
\newcommand{\ms}{\mathcal{MS}}
\newcommand{\Sp}{\mathcal{S}}
\newcommand{\MS}{\ms}
\newcommand{\pa}{\mathcal{H}}
\newcommand{\upp}[1]{{{#1}^{\uparrow}}}
\newcommand{\Y}{H}
\newcommand{\M}{M}
\newcommand{\tildeM}{\tilde{M}}
\newcommand{\barM}{\bar{M}}
\newcommand{\hatM}{\hat{M}}
\newcommand{\mb}{\M}
\newcommand{\abs}[1]{\left\lvert#1\right\rvert} 
\newcommand{\eqpa}[1]{\approx_{#1}} 
\newcommand{\R}{\mathcal{R}}  
\newcommand{\rr}{\mathbf{rr}}        
\newcommand{\Next}{\ensuremath{\mathit{out}}}        
\newcommand{\MC}{{MC}}        
\newcommand{\reduce}[2]{{{#1}_{#2}}}
\newcommand{\A}{\mathcal{L}} 
\newcommand{\act}[1]{\xlongrightarrow{#1}}          
\newcommand{\Mgt}[3]{{#1}_{{#2}\xrightarrow{}{#3}}}
\newcommand{\rate}{\alpha}
\newtheorem{definition}{Definition}
\newcommand{\Reag}{\ensuremath{\bm{\rho}}} 
\newcommand{\Prod}{\ensuremath{\bm{\pi}}} 
\definecolor{violet(ryb)}{rgb}{0.53, 0.0, 0.69}
\newcommand{\crn}{network}
\newcommand{\ctmc}{Markov chain}
\newcommand{\sbe}{SE}
\newcommand{\se}{\sbe}
\newcommand{\calcium}{\ce{Ca^{2+}}}
\newcommand{\cam}{\ce{CaM}}
\newcommand{\camkii}{\ce{CaMKII}}
\newcommand{\sitext}{\emph{Supplementary Material}}
\renewcommand{\ctmc}{CTMC}
\newcommand{\rn}{RN}
\begin{document}


\title{Exact maximal reduction of stochastic reaction networks by species lumping}

\author{
\IEEEauthorblockN{%
Luca Cardelli\IEEEauthorrefmark{1},
Isabel Cristina Perez-Verona\IEEEauthorrefmark{2},
Mirco Tribastone\IEEEauthorrefmark{2}
}
\IEEEauthorblockN{%
Max Tschaikowski\IEEEauthorrefmark{3},
Andrea Vandin\IEEEauthorrefmark{4},
Tabea Waizmann\IEEEauthorrefmark{2}
}
\IEEEauthorblockA{\IEEEauthorrefmark{1}Department of Computer Science, University of Oxford, 34127, UK}
\IEEEauthorblockA{\IEEEauthorrefmark{2}IMT School for Advanced Studies, Lucca, 55100, Italy}
\IEEEauthorblockA{\IEEEauthorrefmark{3}Department of Computer Science, University of Aalborg, 34127, Denmark}
\IEEEauthorblockA{\IEEEauthorrefmark{4}Sant'Anna School of Advanced Studies, Pisa, Italy}}

%
%
%

\maketitle

To whom correspondence should be addressed: Mirco Tribastone.

\begin{abstract}
\textbf{Motivation:} Stochastic reaction networks are a widespread model to describe biological systems where the presence of noise is relevant, such as in cell regulatory processes. Unfortunately, in all but simplest models the resulting discrete state-space representation hinders analytical tractability and makes numerical simulations expensive. Reduction methods can lower complexity by computing model projections that preserve dynamics of interest to the user.\\
\textbf{Results:} We present an exact lumping method for stochastic reaction networks with mass-action kinetics. It hinges on an equivalence relation between the species, resulting in a reduced network where the dynamics of each macro-species is stochastically equivalent to the sum of the original species in each equivalence class, for any choice of the initial state of the system. Furthermore, by an appropriate encoding of kinetic parameters as additional species, the method can establish equivalences that do not depend on specific values of the parameters. The method is supported by an efficient algorithm to compute the largest species equivalence, thus the maximal lumping. The effectiveness and scalability of our lumping technique, as well as the physical interpretability of resulting reductions, is demonstrated in several models of signaling pathways and epidemic processes on complex networks.\\
\textbf{Availability:} The algorithms for species equivalence have been implemented in the software tool ERODE, freely available for download from~\href{https://www.erode.eu}{https://www.erode.eu}.
\end{abstract}

\smallskip

\emph{This article has been submitted to the journal Bioinformatics}

\section{Introduction}

Stochastic reaction networks are a foundational model to study biological systems where the presence of noise cannot be neglected, for instance in cell regulatory processes governed by low-abundance biochemical species~\citep{doi:10.1002/bies.950171112}, which may introduce significant variability in gene expression~\citep{Elowitz16082002}. Their analysis---either by solution of the master equation or by stochastic simulation---is fundamentally hindered by a discrete representation of the state space~\citep{kampen01}, which leads to a combinatorial growth in the number of states in the underlying Markov chain as a function of the abundances of the species.

Here we present a method for exact reduction that preserves the stochastic dynamics of mass-action reaction networks, a fundamental kinetic model in computational systems biology~\citep{10.1371/journal.pcbi.1004012}. The method rests on a relation between species, called \emph{species equivalence} (SE), which can be checked through criteria that depend on the set of reactions of the network. SE gives rise to a reduced stochastic reaction network where the population of each macro-species  tracks the sum of the population levels of all species belonging to an equivalence class.

As with all reduction methods, SE implies some loss of information; namely, the individual dynamical behavior of a species that is aggregated into a macro-species cannot be recovered in general. However, our algorithm for computing SE gives freedom to the modeler as to which original variables to preserve in the reduced network. Indeed, building upon a celebrated result in theoretical computer science~\citep{partitionref}, we compute SE as the coarsest partition that satisfies the equivalence criteria and that refines a given initial partition of species. Thus, a species of interest that is isolated in a singleton block is guaranteed to be preserved in the reduced network. Our partition-refinement algorithm is computationally efficient, in the sense that the algorithm runs in polynomial time as function of the number of species and reactions of the original network. Finally, we can prove the existence of a maximal SE, i.e., the equivalence that leads to the coarsest aggregation of the reaction network.

Formally, SE can be seen a lifting to reaction networks of the notion of lumpability of Markov chains~\citep{Ke76,BuchholzOrdinaryExact}. That is, the reduced network yields a state space where each macro-state tracks the sum of the probabilities of the states in the original Markov chain. Ordinary lumpability requires the availability of the state space that underlies the master equation~\citep{kampen01}; thus, it also requires the initial state of the Markov chain  to be fixed. Instead, SE works at the structural level of the reaction network, by lumping species instead of states; thus, it involves the analysis of an exponentially smaller mathematical object in general. In addition, a practically useful consequence of reasoning at the network level is that an SE holds for any initial state. 
Given that a reaction network can be seen as a Petri net where each species is represented as a place~\citep{petriCRN}, our structural approach is close in spirit to the notion of place bisimulation~\citep{Autant:1992aa}. However, that induces a bisimulation over markings in the  classical, non-quantitative sense~\citep{JOYAL1996164}.

There are several methods for the reduction of the deterministic rate equations of biochemical reaction networks, e.g.,~\citet{Snowden2017}. However, these reductions do not preserve the stochastic behavior in general. For stochastic models in systems biology, lumpability has been studied for rule-based formalisms, providing reduction methods based on rule conditions that induce a lumping of the underlying Markov chain~\citep{Feret_IJSI2013,Feret2012137}.

For mass-action networks, the earlier approach to species lumping by~\citet{smbpaper}, called syntactic Markovian bisimulation, suffers from two limitations. First, syntactic Markovian bisimulation is only a sufficient condition for lumpability. Here, we prove that \se\ is the coarsest possible aggregation that yields a Markov chain lumping according to an equivalence over species. We show that this yields coarser aggregations than syntactic Markovian bisimulation in benchmark models.

The second limitation is that syntactic Markovian bisimulation only supports {\crn}s where reactions involve at most two reagents. Instead, SE can be applied to arbitrary higher-order reactions. This may appear unnecessary because in models of practical relevance reactions are typically of order two at most, following the basic principle that the probability of more than two bodies probabilistically colliding at the same time can be negligible~\citep{Gillespie77}. However, this generalization enables the identification of ``qualitative'' relations between species, i.e., equivalences that do not depend on the specific choice of values of the kinetic parameters. This is done by systematically turning the original network into one where each kinetic parameter appears as a further auxiliary species in a reaction, thus increasing its order by one. A number of case studies from the systems biology literature are used to show examples of parameter-independent physically intelligible model reductions.
\begin{figure*}[t]
\centering
\includegraphics{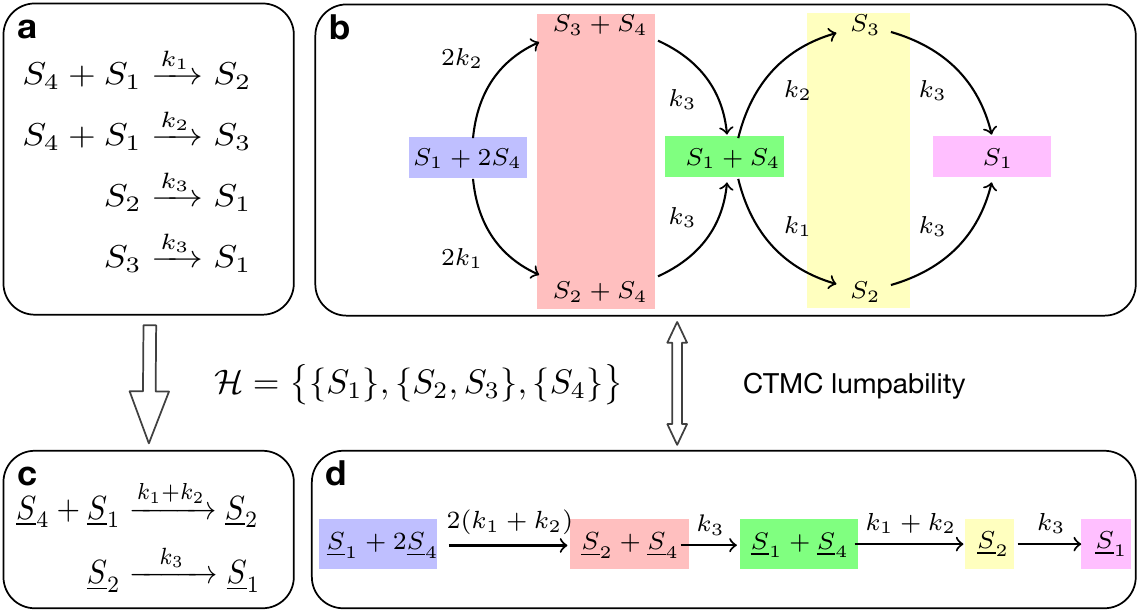}
%
\caption{Illustration of \sbe\ on a simple example.
a) Original \crn\ with four species $S_1$, \ldots, $S_4$.
b)~Underlying \ctmc\ derived from the initial state $\hat{\sigma} = S_1 + 2S_4$. The \ctmc\ is shown in the customary graph representation where each node is a state and transitions between two states $\sigma$, $\sigma'$ are directed arcs labeled with $q(\sigma, \sigma')$.  The colored boxes represent five blocks of an ordinary lumpable partition of the \ctmc\ (here it suffices to check that the outgoing transitions are equal for states in blocks of size two). The partition of species denoted by $\mathcal{H}$ can be shown to be an \sbe\ independently on the actual values of the parameters $k_1$, $k_2$, and $k_3$, hence states that are equal up to the sum of the second and third coordinate form a lumpable partition block.
c)~This \sbe\ gives rise to a reduced \crn\ by choosing the representatives $S_1$, $S_2$ and $S_4$ for each block (underlined in the figure to distinguish them from original species names). The reduced \crn\ has fewer reactions due to the fact that reactions in the original \crn\ are merged into a single one after renaming.
d)~Underlying \ctmc\ of the reduced \crn\ derived from the matching initial state $\underline{S}_1  + 2\underline{S}_4$. The \ctmc\ of the reduced network corresponds to the lumped \ctmc\ of the original network (as indicated by the matching colors of the nodes).
}\label{fig:sbe}
\end{figure*}

\section{Results}

\paragraph*{Stochastic mass-action networks}  Formally, a stochastic mass-action \crn\ is described by a set of species $\mathcal{S}$ 
and a set of reactions $\mathcal{R}$. 
Each reaction is in the form $\rho \xrightarrow{\ \alpha \ } \pi$,
where $\alpha > 0$ is a kinetic parameter and $\rho$ and $\pi$ are multisets of species called reagents and products, respectively. The multiplicity of species $S$ in $\rho$ is denoted by $\rho(S)$, which represents its stoichometry coefficient. The set of all reagents and products across all reactions in the \crn{} are denoted by $\Reag(\R)$ and $\Prod(\R)$, respectively.

A stochastic mass-action network gives rise to a continuous-time Markov chain (\ctmc) where each state $\sigma$ is a multiset of species. From a state $\sigma$ such that $\rho \subseteq \sigma$, a reaction $\rho \xrightarrow{\alpha} \pi$ induces a transition with mass-action propensity
$\alpha \prod_{S \in \rho}\binom{\sigma(S)}{\rho(S)}$
to state $\sigma + \pi - \rho$, where the plus and minus operators indicate multiset union and difference, respectively, while $S\in\rho$ denotes that $S$ belongs to the support of $\rho$, i.e. $\rho(S)>0$.
Given an initial state $\hat{\sigma}$, the state space can be derived by exhaustively applying the reactions to compute all possible states reachable from $\hat{\sigma}$.
We denote with $\Next(\sigma)$
the multiset of outgoing transitions  from state $\sigma$,
\begin{align}\nonumber
\Next(\sigma) \! = \! &
\left\{\!\!\!\left| \ \sigma \act{\lambda} \sigma + \pi - \rho \mid (\rho \!\act{\rate} \!\pi) \!\in\! \mathcal R, \lambda \!=\! \alpha\prod_{S\in\rho}\binom{\sigma(S)}{\rho(S)} \right|\!\!\!\right\}
\end{align}
For any two distinct states $\sigma$ and $\theta$, we denote by $q(\sigma,\theta)$ the sum of the propensities from $\sigma$ to $\theta$ across all reactions, that is
\[
q(\sigma,\theta) = \sum_{(\sigma\act{\lambda}\theta) \in \Next(\sigma)} \lambda .
\]
Moreover, we set $q(\sigma,\sigma)$ to be the negative sum of all possible transitions from state $\sigma$, i.e., $q(\sigma,\sigma) = - \sum_{\theta \neq \sigma} q(\sigma, \theta)$. These values form the CTMC generator matrix,
which characterizes the dynamical evolution of the \ctmc\ by means of the master equation $\dot{p} = p^T Q$. Each component of its solution, $p_\sigma(t)$, is the probability of being in state $\sigma$ at time $t$ starting from some initial probability distribution~\citep{kampen01}.

Figure~\ref{fig:sbe} shows a simple running example to summarize the main results of this paper using the network in Fig.~\ref{fig:sbe}a with species $S_1$, \ldots, $S_4$. The state space from the initial state $\hat{\sigma} = S_1 + 2S_4$ is in shown Fig.~\ref{fig:sbe}b.

Ordinary lumpability is a partition of the state space such that any two states $\sigma_1$, $\sigma_2$ in each partition block $H$ have equal aggregate rates toward states in any block $H'$, that is $\sum_{\sigma \in H'} q(\sigma_1,\sigma) = \sum_{\sigma \in H'} q(\sigma_2,\sigma)$~\citep{Ke76,BuchholzOrdinaryExact}. Given an ordinarily lumpable partition, a \emph{lumped} CTMC can be constructed by associating a macro-state to each block; transitions between macro-states are labelled with the overall rate from a state in the source block toward all states in the target. Distinct colored boxes in Fig.~\ref{fig:sbe}b identify an ordinarily lumpable partition of the sample \ctmc.
Ordinary lumpability preserves stochastic equivalence in the sense that the probability of each block/macro-state is equal to the sum of the probabilities in each original state belonging to that block.

\paragraph*{Species equivalence} Verifying the conditions for ordinary lumpability requires the full enumeration of the CTMC state space, which grows combinatorially with the multiplicities of initial state and the number of reactions. Additionally, the presence of interactions such as constitutive transcription, e.g., $S_1 \xrightarrow{\alpha} S_1 + S_2$, may give rise to infinite state spaces, preventing the use of ordinary lumpability altogether. SE detects ordinary lumpability at the finitary level of the reaction network by identifying an equivalence relation (i.e., a partition) of the species which induces an ordinary lumpable partition over the multisets representing CTMC states.

For this, we consider a natural lifting of a partition $\pa$ of species to multisets of species, which we call the \emph{multiset lifting} of $\pa$ and which we denote by $\upp{\pa}$.
It relates multisets that have same cumulative multiplicity from each partition block.
That is, two multisets/states $\sigma_1$ and  $\sigma_2$ belong to the same block  $\mb \in \upp{\pa}$ if the condition $\sum_{S \in H} \sigma_1(S) = \sum_{S \in H} \sigma_2(S)$ is satisfied for all blocks of species $H \in \pa$.

At the basis of \sbe\ is the notion of reaction rate $\rr(\rho,\pi)$ from reagents $\rho$ to products $\pi$,
\[
\rr(\rho,\pi) =
\begin{cases}
\sum\limits_{{(\rho \act{\rate} \pi) \in \mathcal{R}}} \rate &\quad \text{ if } \rho \neq\pi, \\
-\sum\limits_{\pi' \in \Prod(\mathcal{R}),  \rho\neq \pi'} \rr(\rho,\pi') &\quad \text{ if } \rho = \pi .
\end{cases}
\]
Intuitively, it is defined as the analogue to the entries of the \ctmc\ generator matrix, but it is computable by only inspecting the set of reactions. \sbe\ is defined as a partition of species $\pa$ such that, for any two species $S_i$ and $S_j$ in a block of $\pa$, and for any block of multisets $\mb \in \upp{\pa}$ containing at least one product in $\Prod(\R)$, the condition
\begin{equation}\label{eq:rr}
\sum_{\pi \in \mb} \rr(S_i + \rho, \pi) =  \sum_{\pi \in \mb} \rr(S_j + \rho, \pi),
\end{equation}
holds for all $\rho$  such that $S_i + \rho$ or $S_j + \rho$ are in the set of reagents $\Reag(\R)$.

According to this definition, species $S_2$ and $S_3$ in the sample network of Fig.~\ref{fig:sbe} belong to the same block of an \se. This explains why the ordinarily lumpable partition depicted in Fig.~\ref{fig:sbe}b groups \ctmc\ states that have the same total multiplicities of $S_2$ and $S_3$.

Our first result is that \sbe\ characterizes ordinary lumpability, in the sense that the multiset lifting of an \sbe\ yields an ordinarily lumpable partition of the underlying \ctmc\ derived from any initial state $\hat{\sigma}$; and, vice versa, if a multiset lifting of a partition of species $\pa$ is an ordinarily lumpable partition of the underlying \ctmc{} from any initial state $\hat{\sigma}$, then $\pa$ is an \sbe\ (proved in \sitext,Sec.~\ref{A.1}). We also note that, by~\citet{infiniteOL}, our result  also applies to {\ctmc}s with infinite state spaces, because each state has finitely many incoming and outgoing transitions due to the fact that the number of reactions is finite, and the state space is partitioned in blocks of finite size by multiset lifting.

\paragraph*{Computation of a reduced reaction network up to species equivalence} Analogously to the existence of a lumped \ctmc, one can build a reduced network from an \sbe\ partition. The reduction algorithm  is similar to that in~\citet{DBLP:conf/concur/CardelliTTV15}, where it was defined for deterministic mass-action networks with a reaction-rate interpretation based on ordinary differential equations. Briefly, the reduced \crn\ is obtained by applying the following four steps: (i) choose a representative species for each block of species; (ii) discard all reactions whose reagents have species that are not representatives; (iii) replace the species in the products of the remaining reactions with their representatives; (iv) reduce the set of reactions by merging all those that have same reactants and products by summing their kinetic parameters. The correctness of this algorithm is discussed in \sitext~(Sec.~\ref{A.2}). Following~\citet{DBLP:conf/concur/CardelliTTV15},  the reduced reaction network can be computed in $O( r s \log s)$ time, where $s$ is the number of species and $r$ is the number of  reactions.

Each representative in the reduced \crn\ can be interpreted as a macro-species that tracks the sum of the populations of the distinct species in the original \crn\ that belong to the same \sbe\ partition block. Therefore, for any given initial condition $\hat{\sigma}$ of the original \crn, it is possible to directly generate its lumped \ctmc\ from the reduced \crn\ by fixing a matching initial condition up to sums of populations, as related in general by multiset lifting. The network in Fig.~\ref{fig:sbe}c shows the reduced network up to an \se. The \ctmc{} obtained by ordinarily lumpability of the \ctmc{} in Fig.~\ref{fig:sbe}b corresponds to the \ctmc{} generated by the reduced network with the matching initial condition.

\paragraph*{Computation of the maximal \sbe} There exist efficient algorithms that compute the coarsest ordinarily lumpable partition, i.e., the maximal aggregation, of a \ctmc\ with a finite state space~\citep{DBLP:journals/ipl/DerisaviHS03,DBLP:conf/tacas/ValmariF10}. Here we develop an analogous algorithm for species of a reaction network. First, we show that, indeed, there exists the largest \se\ (\sitext{}, Sec.~\ref{A.3}).
Then, we develop a partition refinement algorithm that takes an initial partition of species as input and computes the largest \sbe\ that
refines such initial partition (\sitext{}, Sec.~\ref{A.4}). The maximal \sbe\ is thus a special case that can be computed by initializing the algorithm with the partition with the trivial singleton block containing all species.

The algorithm maintains a reference to the current candidate \sbe\ partition and a set of splitters, i.e., blocks of products against which the candidate partition is to be checked. Both structures are initialized using the input partition. A fixed-point iteration splits a block of the current candidate \sbe\ partition whenever it falsifies the condition in Eq.~\eqref{eq:rr} with respect to a splitter $\M$. If no such block is found, then the algorithm terminates and the candidate partition is proven to be the largest \sbe\ that refines the initial partition. Else, the falsifying block is split into sub-blocks that have equal values for the quantities in Eq.~\eqref{eq:rr}. The set of splitters is recomputed as the multiset lifting ot the current partition.
 We prove (in \sitext{}) that the algorithm has $O(p \ \! r)$ space and $O(s^2 r^3 p(p + \log r))$ time complexity, where $p$ is the largest number of different species appearing in the reagents or products of every reaction.

\paragraph*{Parameter-independent species equivalences by network expansion} Similarly to ordinary lumpability, checking the conditions of \se\ by Eq.~\ref{eq:rr} implicitly assumes that the values of all the kinetic parameters in the network are fixed. However, without further theory it is also possible to find equivalences that are independent from the specific values of the parameters. In order to do so, let $\mathcal{P}$ denote the set of all kinetic parameters used in the reaction network and assume, without loss of generality, that each kinetic parameter $\alpha \in \mathcal{P}$ is a rational number $n_\alpha/d_\alpha$. Let us then consider an expanded reaction network where we take each parameter $\alpha$ as an additional species $P_\alpha$, and every original reaction $\rho \act{\alpha} \pi$ is transformed into the reaction $P_\alpha + \rho \act{1} \pi + P_\alpha$. This is a reaction of higher order with kinetic parameter equal to one.

For this extended reaction network to be related to the original one,
each state of its CTMC must represent a multiset of species; in particular the initial condition of each additional species $P_\alpha$ must be a nonnegative integer, which will be fixed throughout the state space because the population of $P_\alpha$ does not change by construction. A suitable initialization of $P_\alpha$ may be for instance $\frac{n_\alpha}{d_\alpha} \text{lcm}\{ d_\alpha \mid \alpha \in \mathcal{P} \}$, where lcm denotes the least common multiple of all denominators of the parameters. With this in place, the original and the expanded network will give rise to the same state space (dropping the components of the CTMC state related to $P_\alpha$ because they are constant, as discussed). The transition rates of the expanded network are instead all scaled up by the same factor, which can essentially be interpreted as a time rescaling of the original CTMC.

Since this rescaling is the same for all states, any ordinarily lumpable partition on the CTMC of the expanded network will be an ordinary lumpable partition on the CTMC of the original one, and vice versa. More importantly, the computation of \se\ on the expanded network will be made independent of the specific values chosen for the kinetic parameters. This is because the parameter values are encoded into the components of the initial CTMC state associated with the auxiliary species $P_\alpha$, and \se\ finds equivalences that hold for all initial states of the \ctmc. Thus the computation of the largest \se\ in the original network may proceed by considering the initial partition consisting of two blocks, one for all the species and one for all species-parameters in the expanded network, respectively.
For the example in Fig.~\ref{fig:sbe}, the largest \se\ computed from the initial partition $\{ \{S_1, S_2, S_3, S_4\},  \{ P_{k_1}, P_{k_2}, P_{k_3} \} \}$ is $\{ \{S_1\}, \{S_2, S_3\}, \{ S_4\},  \{ P_{k_1}, P_{k_2}\}, \{ P_{k_3}\} \}$. In addition to the equivalence between the two species $S_2$ and $S_3$, it detects that the reduced model depends only on the sum of the parameters $k_1 + k_2$, for any given value.

\section{Examples}

In this section we present reductions on case studies from the literature, computed with an implementation of SE within the software tool ERODE~\citep{erode}, available at \href{https://www.erode.eu}{https://www.erode.eu}. The reported results refer to the analysis of the models with the values of the kinetic parameters as reported in the associated publications. However, the reductions are preserved also in the extended parameter-independent versions obtained as discussed above.

\paragraph*{Species equivalence in multi-site phosphorylation processes} Mechanistic models of signaling pathways are prone to a rapid growth in the number of species and reactions because of the combinatorial effects due to the distinct configurations in which a molecular complex can be found~\citep{FEBS:FEBS7027}. A prototypical situation is multisite phosphorylation, a fundamental process in eukaryotic cells that is responsible for various mechanisms such as the regulation of switch-like behavior~\citep{Gunawardena11102005,thomson2009}. For example, let us consider a protein $A$ with $n$ sites that can be phosphorylated by means of kinase $K$ according to a random mechanism, while dephosphorylation occurs as a spontaneous reaction. To describe this system one needs $2^n$ distinct molecular species that track the phosphorylation/dephosphorylation status of each site~\citep{FEBS:FEBS7027}. Each species is written in the form $A(s_1, \ldots, s_n)$ where $s_i = 0$ (resp., $s_i = 1$) indicates that the $i$-th site is dephosphorylated (resp., phosphorylated), for all $i = 1, \ldots, n$.
The resulting mass-action network is given by:
\begin{align*}
A(s_1, \ldots, s_{i-1}, 0, s_{i+1}, \ldots, s_n) + K & \xrightarrow{r_1} \\
& \!\!\!\!\!\!\!\!\! A(s_1, \ldots, s_{i-1}, 1, s_{i+1}, \ldots, s_n) , \\
A(s_1, \ldots, s_{i-1}, 1, s_{i+1}, \ldots, s_n) & \xrightarrow{r_2} \\
& \!\!\!\!\!\!\!\!\!\!\!\!\!\!\!\!\!\!\!\!\!\! A(s_1, \ldots, s_{i-1}, 0, s_{i+1}, \ldots, s_n) + K ,
\end{align*}
for all $i = 1, \ldots, n$ and for any combination of site states $s_1, \ldots, s_{i-1}, s_{i+1}, \ldots, s_n$. To simplify the mathematical model, it is assumed that the kinetic parameters $r_1$, $r_2$ are equal at all phosphorylation sites~\citep{citeulike:8493139}.

For a fixed $n$, the maximal \se\ aggregates molecular species that are equal up to the number of phosphorylated sites that they exhibit, independently of their identity. More formally, if we consider the  block of species $\Y_i$ that groups all configurations that have exactly $i$ phosphorylated sites,
$\Y_i  = \big\{ A(s_1, \ldots, s_n) \mid s_1 + \ldots + s_n = i  \big \}$, for $i = 0, \ldots, n$, then the maximal \se\ is given by the partition $\big\{ \{ K \}, \Y_0, \ldots, \Y_n \big \}$.

\paragraph*{Identification of equivalent molecular complexes in a model of synaptic plasticity} The assumption of equal kinetic parameters is not necessary to achieve aggregation with \sbe. We show this on a  model from~\citet{pepke2010dynamic} on the interactions between calcium (\calcium), calmodulin (\cam), and the \calcium-\cam\ dependent protein kinase II (\camkii), which play a fundamental role in the mechanism of synaptic plasticity~\citep{Lisman:2002aa}. (It is available in the BioModels database~\citep{BioModels2010}, identified as \textsl{MODEL1001150000}.)
The model describes the following processes: cooperative binding of \calcium\ to two pairs of domains located at the amino (N) and carboxyl (C) termini of \cam; binding of \cam\ to a monomeric \camkii\ subunit; and autophosphorylation of a \camkii\ monomer through the formation of a dimer which requires \cam\ to be bound to both subunits (Fig.~\ref{fig:calmodulin}A). The maximal \sbe\ finds that all phosphorylated monomers are equivalent (Fig.~\ref{fig:calmodulin}B), although their dynamics are characterized by distinct kinetic parameters to account for phosphorylation rates that depend on the number of bound \calcium~\citep{Shifman13968}. Further, such equivalences carry over to all complexes where they are present as sub-units. This leads to equivalence classes consisting of nine molecular species each, with an overall reduction from 155 species and 480 reactions to 75 species and 254 reactions.
Notably, important quantities to observe in this model are the amounts of free and bound \cam~\citep{Lisman:2012aa}, both recoverable from the reduced \crn.

\paragraph*{Internalization of the GTPase cycle in a model of the spindle position checkpoint} In both previous examples, \sbe\ can be physically interpreted as a reduction that preserves both the structure of equivalent molecular species as well as their function. \sbe\ can also aggregate
\begin{figure}[h!]
\centering
\includegraphics[width=0.95\linewidth]{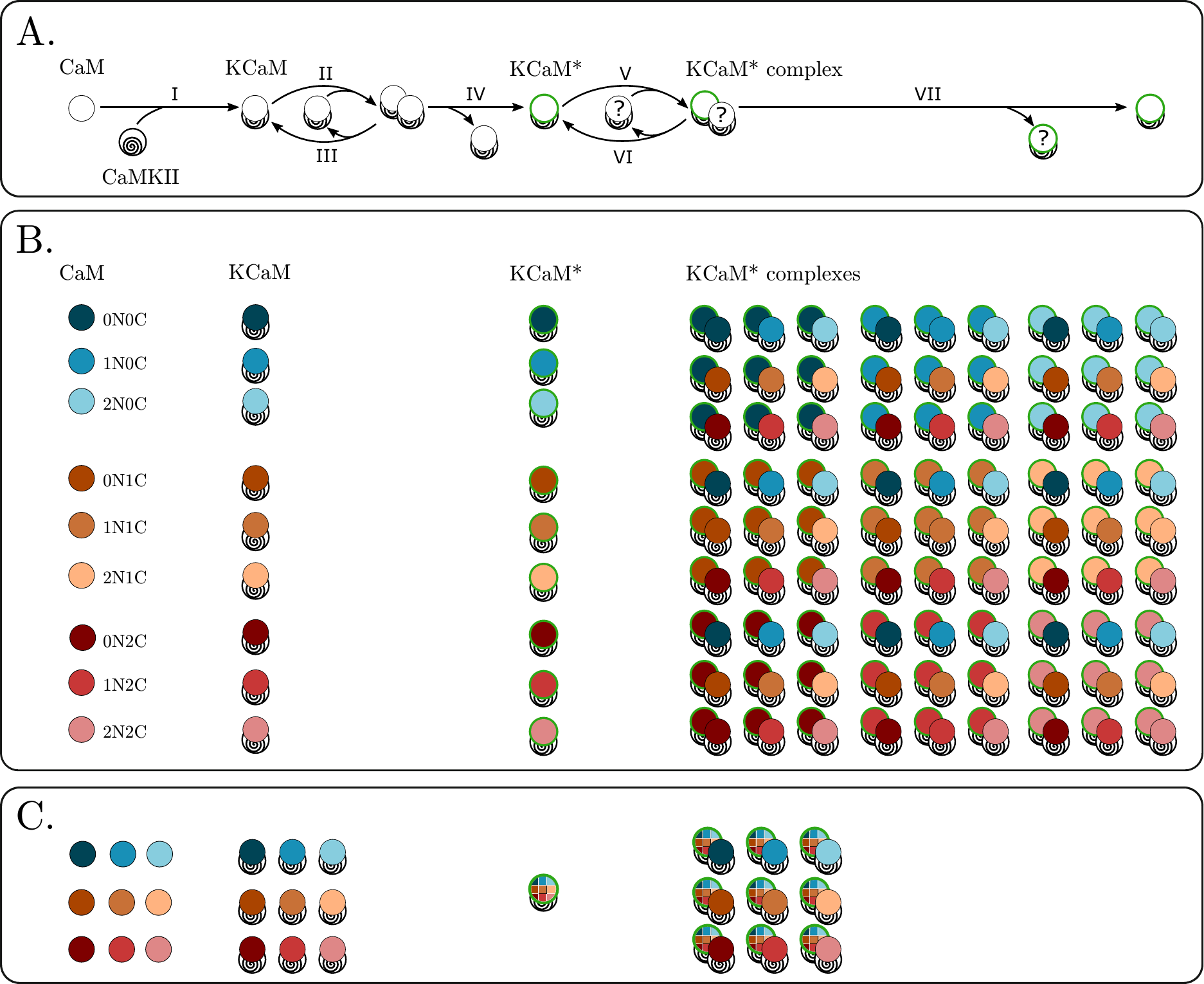}
\caption{(A)  Kinetic scheme of the interactions between \calcium-bound-\cam\ and CaMKII, adapted from~\citet{pepke2010dynamic}. \cam\ binds with the \camkii\ monomer to  form a KCaM complex (reaction I). KCaM may undergo reversible dimerazation (reactions II and III), KCaM dimerization which can lead to autophosphorylation (reaction IV). A unit of phosphorylated KCaM is labelled KCaM* and are represented in the diagram as a green-circled KCaM. KCaM* can interact with any unphosphorylated KCaM unit (indicated by the `?' sign) to form a KCaM* complex reversibly (reactions V and VI), leading to autophosphorlyation (reaction VII). (B) Molecular species that can participate in the reaction scheme. CaM units are represented with different colors to indicate the possible states of \calcium-binding, with the label xNyC, with $\text{x},\text{y} = 0,1,2$ denoting the number \calcium-bound domains at the amino and carboxyl termini, respectively. KCaM and KCaM* are represented similarly. The block of KCaM* complexes contains 81 distinct molecular species obtained from all possible interactions between KCaM* and KCaM. (C) The maximal SE yields a coarse-grained network which can be interpreted as having the same reaction scheme (A), but with fewer species. In particular CaM and KCaM complexes are not aggregated, but all the distinct KCaM* are collapsed into the same equivalence class (indicated by the multiple-color representative). Such equivalence carries over to all KCaM* complexes, in the sense that all dimers with the same phosphorylated form are in the same SE block. This allows the collapse of the distinct 81 KCaM* complexes above to 9 macro-species.}\label{fig:calmodulin}
\end{figure}
species that exhibit contrasting functionality, such as in signal transduction switches realized by GTP- and GDP-bound forms of GTPases. To show this, we consider the model in~\citet{caydasi2012dynamical} of the spindle position checkpoint (SPOC), a mechanism in the budding yeast responsible for detecting the correct alignment of the nucleus between mother and daughter cells~\citep{doi:10.1146/annurev.genet.37.042203.120656}.
(The BioModels identifier for this model is \textsl{BIOMD0000000699}.)
The most upstream event of the pathway involves GTPase Tem1, which is regulated by the GTPase-activating protein (GAP) complex composed of Bfa1 and Bub2. Under correct alignment the GAP complex is inhibited by a kinase Cdc5 phosphorylating Bfa1~\citep{gruneberg2000}; under misalignment, the kinase Kin4 phosphorylates Bfa1, preventing the inhibitory phosphorylation by Cdc5~\citep{Pereira:2005aa}.

\begin{figure}[t]
\includegraphics[width=\linewidth]{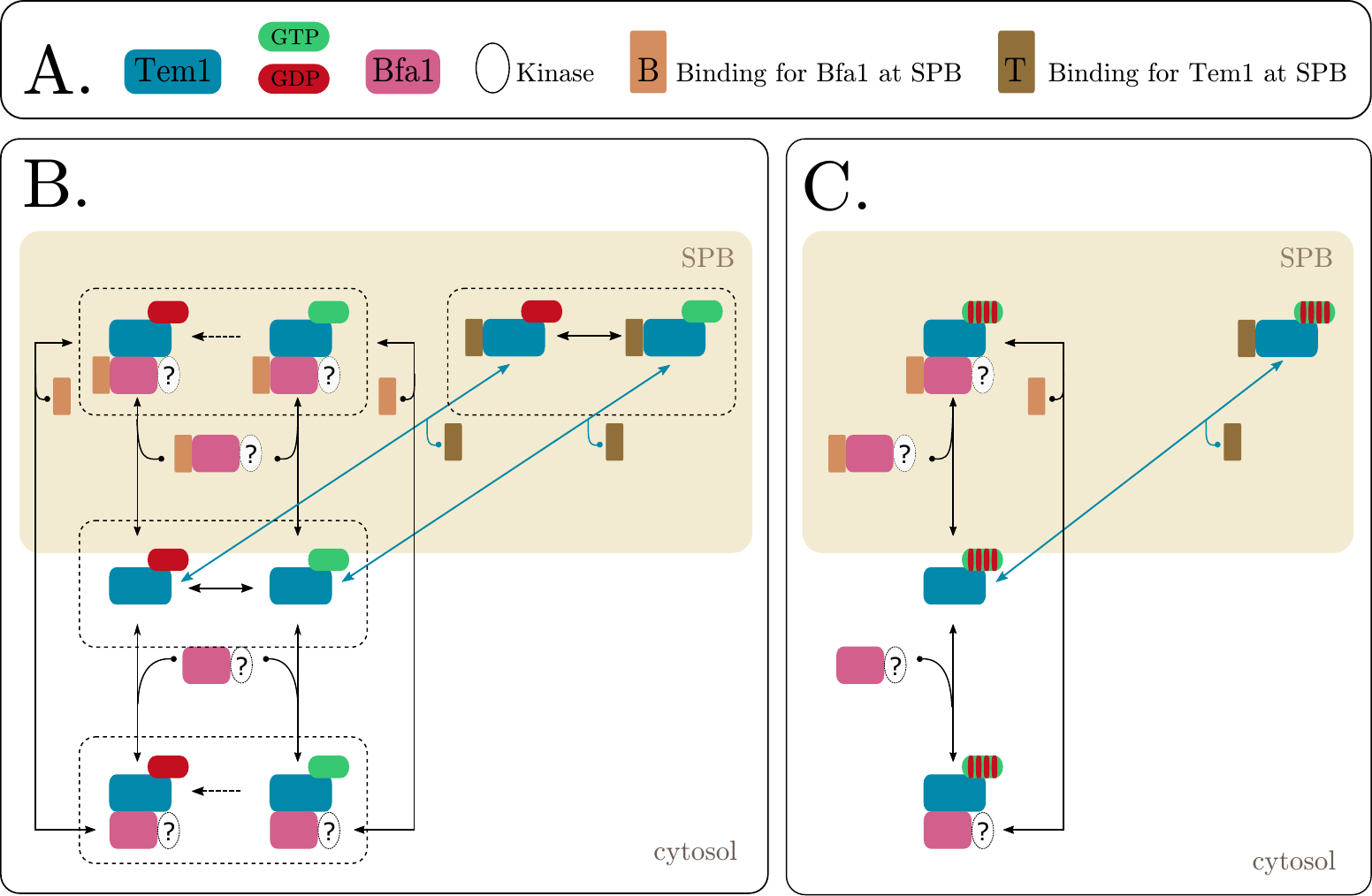}
\caption{Species equivalence for the SPOC dynamical model from~\citet{caydasi2012dynamical}. (A) Model subunits. (B) Illustration of the pathway. Beige boxes indicate the SPB compartment. Reactions crossing the compartment boundary represent the reversible SPB association of the respective species or complexes. Blue reactions mark the intrinsic Tem1 GTPase-cycle and reversible SPB association. Tem1 which is bound directly to the SPB does not interact with Bfa1, whilst Tem1 in the cytosol interacts with cytosol-Bfa1 and SPB-bound Bfa1. These interactions occur for all instances of Bfa1 regardless the Bfa1 phosphorylation state (indicated by the `?' symbol). GTP hydrolysis by the respective Bfa1-Tem1-GTP complexes (dashed reaction arrow) is accelerated according to the GAP activity of the respective state of Bfa1. The dashed boxes represent the SE equivalence classes indicating that the two forms of the GTPase Tem1 are equal up to \se. This equivalence extends to all complexes with the same configuration of subunits, up to GTP/GDP binding state of Tem1. (C) Graphical interpretation of the network reduced by \se.} \label{fig:spc}
\centering
\end{figure}

In the model, Tem1 binds to the yeast centrosomes (called spindle pole bodies, SPBs) via GAP-dependent and GAP-independent sites. The intrinsic GTPase switching cycle of Tem1 is modeled as a reversible first-order reaction that converts \ce{Tem1^{GTP}} into \ce{Tem1^{GDP}} and vice versa~\citep[Supplementary Material, Section 1]{caydasi2012dynamical}. The maximal \sbe\  collapses complexes that are  equal up to the GTP- or GDP-bound state, yielding eight equivalence classes with pairs of two molecular species (Fig.~\ref{fig:spc}). The original \crn\ with 24 species and 71 reactions is reduced to 16 species and 36 reactions, from which one may recover observables of interest such as the total amount of active Bfa1~\citep[Supplementary Material, Section 3]{caydasi2012dynamical}.

\paragraph*{Species equivalence for epidemic processes in networks} Models of epidemic processes are well established since the celebrated work by~\citet{sir1927}. 
The availability of large datasets in a range of socio-technical systems has prompted the study of epidemic processes on complex networks that consider the heterogeneity of real-world processes, which is neglected in simpler variants that assume a well-mixed, uniform environment~\citep{pastor2015epidemic}.

Aggregation of epidemic processes on networks has been studied in~\citet{simon2011exact}, relating symmetries in the graph with lumping on the \ctmc. Graph symmetry is formalized in terms of nodes belonging to the same orbit, thereby satisfying the property that there exists a graph automorphism relating them. Then, the orbit partition, i.e., the partition of nodes where each block is a distinct orbit, induces a \ctmc\ lumping that tracks the number of nodes in each block of the orbit partition that are in any given state~\citep{simon2011exact}.

Here we show that \sbe\ can be seen as a complementary, exact aggregation method for epidemic processes on complex networks. As an example, we study the well-known susceptible-infected-susceptible (SIS) model, where each node in the network in the susceptible state can be infected with a rate proportional to the number of infected neighbors, and recover from the infection according to an independent Poissonian process. Let $A = (a_{ij})$, with $A \in \mathbb{R}^{N \times N}$, define the adjacency matrix of a graph with $N$ nodes representing the network topology, with $a_{ij} > 0$ denoting the presence of a possibly weighted edge between node $i$ and $j$.

The SIS epidemic process can be described by the \crn\
\begin{align}\label{eq:sis}
S_i + I_j  &   \xrightarrow{\ a_{ij} \lambda\ } I_i + I_j,  &   I_i  &   \xrightarrow{\ \gamma \ } S_i, \ 1 \leq i,j \leq N, \ j \neq i,
\end{align}
where the first reaction models infections by neighbors and the second reaction is the spontaneous recovery, with parameters $\lambda$ and $\gamma$. 
In a similar fashion, different variants of the process, such as SIR, SIRS, and SEIR~\citep{pastor2015epidemic}, can be described. Any physically meaningful initial state $\hat{\sigma}$ for this \crn\ must be such that each node $i$ is initially in infected ($\hat{\sigma}_{S_i} = 0$,
$\hat{\sigma}_{I_i} = 1$) or susceptible ($\hat{\sigma}_{S_i} = 1$, $\hat{\sigma}_{I_i} = 0$).  This setting makes stochastic models of epidemics spreading on complex networks difficult to study exactly because the state of each individual node is  tracked explicitly~\citep{wang2017unification}, leading to a state space size with $2^N$ distinct configurations~\citep{simon2011exact}. \sbe\ provides an ordinary lumpability of the underlying \ctmc, without ever generating it, on the \crn\ of Eq.~\eqref{eq:sis}, which has exponentially smaller size because it has $2N$ species and $E + N$ reactions, where $E$ is the number of nonzero entries in the adjacency matrix of the graph.

\begin{figure}
\centering
\includegraphics[width=\linewidth]{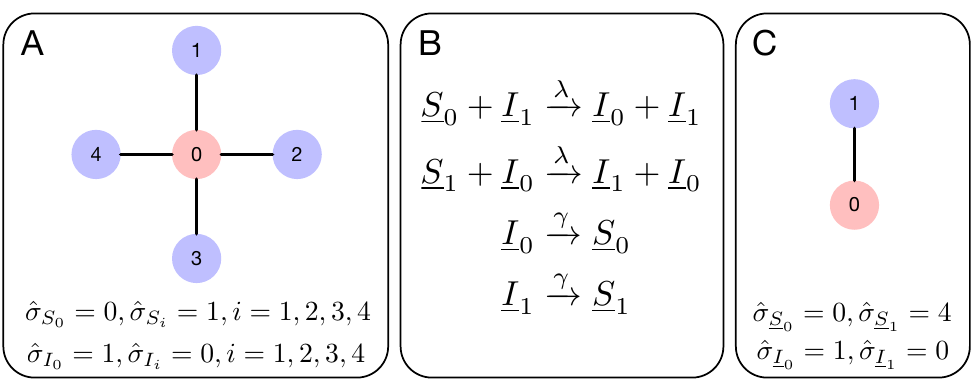}
\caption{Example of \sbe\ reduction of SIS dynamics on a coarse-grained network. (A)~Star network over which an SIS process evolves according to Eq.~\eqref{eq:sis}, starting from an initial condition where the infection starts at node $0$. (B)~Reduced network (species representatives are underlined in the figure for clarity) according to the largest \se\ refinement of the initial partition with blocks $\mathcal{S} = \{ S_0, S_1, S_2, S_3, S_4 \}$ and $\mathcal{I} = \{ I_0, I_1, I_2, I_3, I_4 \}$. This \sbe\ has blocks $\{ S_0 \}$, $\{ I_0 \}$, $\{ S_1, S_2, S_3, S_4 \}$ and $\{ I_1, I_2, I_3, I_4 \}$ . (C)~The \sbe\ partition induces a partition on the graph with blocks $\{0\}$ and $\{ 1,2,3,4\}$.  The reduced \crn\ corresponds to the description of the SIS dynamics on the quotient graph. The lumpability relation holds for an initial condition of the reduced \crn\ that is consistent with the initial condition of the original \crn\ up to \sbe.}\label{fig:sis}
\end{figure}
\begin{table*}[t]
\caption{%
Aggregation of SIS dynamics on benchmark networks. Models \textsf{tntp-ChicagoRegional}, \textsf{ego-facebook}, \textsf{as20000102}, \textsf{arenas-pgp}, \textsf{as-caida20071105}, \textsf{topology}, and \textsf{douban}, are taken from the Koblenz Network Collection~\citep{konect}; \textsf{web-webbase-2001} and \textsf{ia-email-EU} are taken from the Network Data Repository~\citep{nr}. $N = \text{number of vertices}$; $E = \text{number of edges}$ in the network.
}\label{tab:complex}
\centering
\scalebox{1.0}{
\begin{tabular}{llrrrrr}
\toprule
& & \multicolumn{2}{c}{\emph{Original size}} & \multicolumn{2}{c}{\emph{Reduced size}} \\
\cmidrule(l){3-4} \cmidrule(l){5-6}
\multicolumn{1}{c}{\emph{Network}}    &  Ref. & \multicolumn{1}{c}{$N$}  &  \multicolumn{1}{c}{$E$} &  \multicolumn{1}{c}{$N$}  & \multicolumn{1}{c}{$E$} &  \multicolumn{1}{c}{\emph{Orbits}} \\
\midrule
\textsf{tntp-ChicagoRegional}   &  \cite{konect:tntp-chicago1}  &        \np{1467} & \np{2596} & \np{635} & \np{932} & \np{166} \\
\textsf{ego-facebook}            & \cite{konect:McAuley2012}  & \np{2888} & \np{5962} & \np{35} & \np{104} & \np{35} \\
\textsf{as20000102}               & \cite{konect:leskovec107}  & \np{6474} & \np{27790} & \np{3885} & \np{19437} & \np{3690} \\
\textsf{arenas-pgp}               & \cite{konect:boguna}   & \np{10680} & \np{48632} & \np{8673} & \np{44074} & \np{7944} \\
\textsf{web-webbase-2001}      & \cite{boldi2004-ubicrawler} & \np{16062} & \np{51186} & \np{5253} & \np{24232} & \np{3574} \\
\textsf{as-caida20071105}     & \cite{konect:leskovec107}   & \np{26475} & \np{106762} & \np{13393} & \np{69184} & \np{13252} \\
\textsf{ia-email-EU}           & \cite{konect:leskovec107}  & \np{32430} & \np{108794} & \np{6262} & \np{53228} & \np{6259} \\
\textsf{topology}              & \cite{konect:zhang05}  & \np{34761} & \np{215440} & \np{19246} & \np{168782} & \np{19128} \\
\textsf{douban} & \cite{konect:socialcomputing} & \np{154908} & \np{654324} & \np{59524} & \np{462128} & \np{59493} \\
\bottomrule
\end{tabular}
}
\end{table*}

For the SIS model, the maximal \sbe\ is the trivial partition where all the species are in a single block. This is an invariant property stating that the total population of individuals in the system is constant~\citep{simon2011exact}. Thus, we consider non-degenerate reductions using initial partitions with two blocks,
$\{S_1, \ldots,  S_n\}$ and $\{I_1, \ldots, I_n\}$,
that separate species associated with nodes in the susceptible state from those in the infected state, respectively. As an illustrative example, let us consider the simple star graph~(Fig.~\ref{fig:sis}). An inspection of the obtained \sbe\ equivalence classes reveals that each refinement of the initial block $\{S_1, S_2, S_3, S_4\}$ matches a refinement of $\{I_1, I_2, I_3, I_4\}$ for the same subset of nodes of the graph. Such an  \sbe\ naturally induces a partitioning of the graph, and the reduction can be understood as an SIS dynamics on the quotient graph where each macro-node subsumes a partition block of nodes induced by \sbe.

We performed a systematic analysis of SIS processes evolving on several real-world benchmark networks (Table~\ref{tab:complex}), which confirms the observation made on the simple star graph. Since in all cases the reduced model is interpretable as an epidemic process, it is still amenable to a wide range of analysis techniques developed for such models~\citep{pastor2015epidemic,wang2017unification}. These include mean-field and pair approximation~\citep{VanMieghem2011,PhysRevE.85.056111,mata2013pair}, whose computational cost for the generation and solution of the resulting nonlinear differential equations may benefit from the availability of a stochastically equivalent reduced model.

Coarser aggregations of the \ctmc\ state space could be obtained in principle.
For example, the line graph in Fig.~\ref{fig:sis}C admits the orbit partition that collapses nodes 0 and~1, thereby inducing a lumping following~\citep{simon2011exact}. However, this is not detected by \sbe. Importantly, this does not contradict our characterization result. The reason is that the lumpability relation induced by \sbe\ must hold for all population vectors that are equal up to  \sbe. However, the lumpable partition derived with the approach in~\citet{simon2011exact} violates this property because it does not aggregate states $S_0 + I_0 + S_1 + I_1$ and $S_0 + S_0 + I_1 + I_1$, which preserve the sums of infected and susceptible individuals.
Indeed, in the real-world networks in Table~\ref{tab:complex} we found that \sbe\ always induces a partition on the nodes of the graph which is finer than the orbit partition (whose size is listed in the last column, as reported in~\citet{sym10010029}), albeit  not considerably so in some cases. On the other hand, \sbe\ can be applied to models that do not satisfy the conditions in~\citet{simon2011exact}. Indeed, the star network of Fig.~\ref{fig:sis} can be lumped also in the case of node-specific parameters (\sitext{}, Sec.~\ref{A.5}), while the results in~\citet{simon2011exact} require equal transmission and recovery rates at every node.

\paragraph*{Relationship with syntactic Markovian bisimulation} Applied to the models presented in this section, the earlier variant of SE, syntactic Markovian bisimulation~\citep{smbpaper}, yields the same reductions when applied to networks where the kinetic parameters are fixed. In \sitext{} (Sec.~\ref{A.6}) we present further models from the literature where SE yields maximal aggregations that are coarser than syntactic Markovian bisimulation, with up to about one order of magnitude fewer species.

\paragraph*{Speeding up stochastic simulations}
In \sitext{} (Sec.~\ref{A.7}), we use the same set of models to
 also provide evidence of the computational savings when analysing by stochastic simulation the reduced network in place of the original one. 
We report runtime speed-ups of up to three orders of magnitude using state-of-the-art algorithms as implemented in the StochKit simulation framework~\citep{DBLP:journals/bioinformatics/SanftWRFLP11}.

\section{Conclusion}

Stochasticity is a key tool to understand a variety of phenomena regarding the dynamics of reaction networks, but the capability of exactly analyzing complex models escapes us due to the lack of analytical solutions and the high computational cost of numerical simulations in general. Species equivalence enables aggregation in the sense of Markov chain lumping by identifying structural properties on the set of reactions, without the need of costly state-space enumeration. Owing to the polynomial space and time complexity of the reduction algorithm, it can be seen as a universal pre-processing step that exactly preserves the stochastic dynamics of species of interest to the modeler. Since it gives rise to a \crn\ where the reactions preserve the structure (up to a renaming of the species into equivalence classes), the reduction maintains a physical interpretation in terms of coarse-grained interactions between populations of macro-species. The possibility of computing  reductions that are not dependent from specific values of the kinetic parameters may reveal structural aggregations in the network, in addition to making the reduced model reusable across different parameter settings, e.g., when performing sensitivity analyses.

Being exact, our method can be combined with other techniques for the analysis of stochastic reaction networks. For instance, when feasible, one can generate the underlying \ctmc\ to be further analyzed or reduced~\citep{DBLP:conf/tacas/ValmariF10,doi:10.1063/1.2145882,henzinger2009sliding}; the reduced \crn\ can be subjected to complementary coarse-graining techniques concerned with time-scale separation (e.g.,~\citet{Sinitsyn10546,doi:10.1063/1.3050350,10.2307/23474859,doi:10.1063/1.4936394,cappelletti2016,BO20171}).
More generally, since the reduced \crn\ preserves the stochastic dynamics in the sense specified above, it can be used as the basis for other methods such as linear noise-or moment-closure approximation~\citep{1751-8121-50-9-093001}, where the complexity of the resulting system of equations depends on the network size.

\section*{Funding}

This work has been partially supported by Italian Ministry for Research under the PRIN project ``SEDUCE'', no. 2017TWRCNB, by the Independent Research Fund Denmark under the DFF RP1 Project REDUCTO no. 9040-00224B and the Danish Poul Due Jensen Foundation, grant
883901.


\bibliographystyle{natbib}
\bibliography{references}

\newpage
\onecolumn

\appendix[Supplementary Material]

\section{Supplementary Material}

\paragraph*{Notation} We begin by fixing preliminary notation. Given a set of species $\Y \subseteq \Sp$, we use $\rho(\Y) = \sum_{S \in \Y} \rho(S)$ to denote the cumulative multiplicity of all the species of $\Y$ in the multiset $\rho$.  
By definition of multiset lifting, all multisets in a block of $\upp{\pa}$ have same cumulative multiplicity from any block of $\Y\in\pa$; 
therefore, given a $\M\in\upp{\pa}$, we use $\M(\Y)$ to denote $\rho(\Y)$, where $\rho$ is any element of $\M$. Finally, we denote by $\ms(\Sp)$ the set of finite multisets of species in $\Sp$; given a multiset $\sigma$ and a set of multisets $G\subseteq\ms(\Sp)$ we use $q[\sigma,G]$ to denote $\sum_{\theta\in G}q(\sigma,\theta)$; similarly, we use $\rr[\rho,G]$ to denote the cumulative reaction rate $\sum_{\pi\in G}\rr(\rho,\pi)$.

\subsection{SE as a characterization for ordinary lumpability}
\label{A.1}

To prove the characterization result, we show the two directions separately. 
\paragraph*{\se\ is a sufficient condition} To prove that \se{} is a sufficient condition for ordinary lumpability, the first step is to express $q[\sigma, \tildeM]$, i.e., the cumulative transition rate from a \ctmc\ state $\sigma$ to states belonging to a block $\tildeM$ of the multiset lifting of an \se, in terms the reagents and products of the reactions that generate those transitions. Given source and target blocks $\M$ and $\tildeM$, respectively, and for a given block of multisets $\barM$, we define the set
\begin{align}\label{eq:blockHHH}
\Mgt{\barM}{\M}{\tildeM} =\{\pi \in\ms(\Sp) \mid
\exists \sigma\in\M, \rho\in\barM \text{ s.t. }
 \rho\subseteq\sigma \text{ and } (\sigma - \rho + \pi) \in \tildeM
\},
\end{align}
which collects all products $\pi$ of reactions which can be executed in a state $\sigma \in \M$, such that the reagents belong to $\barM$ and the target state is in block $\tildeM$. Importantly, it can be shown that $\Mgt{\barM}{\M}{\tildeM}$ is a block of the multiset lifting.
\begin{lemma}\label{rm:propertiesOfLifting}
Let $\Sp$ be a set, $\pa$ be a partition of $\Sp$, and $\eqpa{\pa}$ the equivalence  inducing it. Let $\upp{\pa}$ be the multiset lifting of $\pa$, and $\eqpa{\upp{\pa}}$ the equivalence on multisets inducing it. 
For all $\sigma,\sigma',\pi,\pi',\rho,\rho'\in\ms(\mathcal S)$, we have
\begin{enumerate}
\item $(\sigma\cup \pi,\sigma\cup \pi')\in \ \eqpa{\upp{\pa}}$ if and only if $(\pi,\pi')\in \ \eqpa{\upp{\pa}}$,
\item if $(\sigma,\sigma')\in \ \eqpa{\upp{\pa}}$, then $(\sigma\cup \pi,\sigma'\cup \pi')\in \ \eqpa{\upp{\pa}}$ if and only if $(\pi,\pi')\in \ \eqpa{\upp{\pa}}$,
\item $(\sigma- \rho,\sigma- \rho')\in \ \eqpa{\upp{\pa}}$ if and only if $\rho\subseteq\sigma\supseteq\rho'$ and $(\rho,\rho')\in \ \eqpa{\upp{\pa}}$,
\item if $(\sigma,\sigma')\in \ \eqpa{\upp{\pa}}$, then $(\sigma - \rho,\sigma'- \rho')\in \ \eqpa{\upp{\pa}}$ if and only if $\rho\subseteq\sigma$, $\rho'\subseteq\sigma'$ and $(\rho,\rho')\in \ \eqpa{\upp{\pa}}$.
\end{enumerate}
For any $\M,\tildeM \in \upp{\pa}$, if it is possible to obtain multisets in $\tildeM$ by adding species to those in $\M$, i.e., if $\M(\Y)\leq\tildeM(\Y)$ for all $\Y\in\pa$, then
\begin{itemize}
\item From points $1$, $2$ we have that 
there exists one $\hatM\in\upp{\pa}$ such that  $\tildeM=\{\sigma+\hat\sigma\mid \sigma\in\M, \hat\sigma\in\hatM\}$. That is, we obtain $\tildeM$ by pairwise merging the multisets in $\M$ with those in $\hatM$.
\item From points $3$, $4$ we have that the set $\{\tilde\sigma - \sigma \mid \tilde\sigma\in\tilde\M, \sigma\in\M, \sigma\subseteq \tilde\sigma\}$ is a block of $\upp{\pa}$.
\end{itemize}
\end{lemma}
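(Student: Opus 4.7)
The plan is to leverage the fact that each block $\M \in \upp{\pa}$ is completely determined by the tuple of cumulative multiplicities $(\M(\Y))_{\Y \in \pa}$, so two multisets $\sigma, \sigma'$ belong to the same block precisely when $\sigma(\Y) = \sigma'(\Y)$ for every $\Y \in \pa$. The key algebraic fact driving the whole proof is that cumulative multiplicity on a block distributes additively over multiset union, $(\sigma \cup \pi)(\Y) = \sigma(\Y) + \pi(\Y)$, and subtractively over multiset difference whenever the latter is defined, $(\sigma - \rho)(\Y) = \sigma(\Y) - \rho(\Y)$ when $\rho \subseteq \sigma$. Thanks to these identities, all four numbered claims reduce to elementary cancellation laws in the nonnegative integers, checked block by block.

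For points 1 and 2, I would argue for each $\Y \in \pa$ that $(\sigma \cup \pi)(\Y) = (\sigma \cup \pi')(\Y)$ is equivalent, after cancelling $\sigma(\Y)$ on both sides, to $\pi(\Y) = \pi'(\Y)$; point 2 then follows at once by replacing $\sigma(\Y)$ with $\sigma'(\Y)$ on one side using the assumed equivalence $(\sigma,\sigma') \in \eqpa{\upp{\pa}}$. For points 3 and 4, I adopt the convention that $\sigma - \rho$ is defined only when $\rho \subseteq \sigma$, so the statement $(\sigma - \rho, \sigma - \rho') \in \eqpa{\upp{\pa}}$ implicitly asserts $\rho, \rho' \subseteq \sigma$; this explains why the containment conditions appear on the right-hand side of the biconditional. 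Under these containments, $\sigma(\Y) - \rho(\Y) = \sigma(\Y) - \rho'(\Y)$ rearranges to $\rho(\Y) = \rho'(\Y)$, proving point 3, and the analogous computation, combined with $\sigma(\Y) = \sigma'(\Y)$, yields point 4.

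For the two corollaries, I would define $\hatM$ to be the unique block of $\upp{\pa}$ with cumulative multiplicities $\hatM(\Y) := \tildeM(\Y) - \M(\Y)$ for each $\Y \in \pa$, which is well-defined because of the hypothesis $\M(\Y) \leq \tildeM(\Y)$. The inclusion $\{\sigma + \hat\sigma \mid \sigma \in \M, \hat\sigma \in \hatM\} \subseteq \tildeM$ is then immediate from additivity of cumulative multiplicities (points 1--2). For the reverse inclusion, given any $\tilde\sigma \in \tildeM$, I would construct $\sigma \subseteq \tilde\sigma$ block by block: within each $\Y$, the species of $\tilde\sigma$ lying in $\Y$ have total multiplicity $\tildeM(\Y) \geq \M(\Y)$, so one can pick a sub-multiset of them with total multiplicity exactly $\M(\Y)$; assembling these sub-multisets across all blocks yields $\sigma \in \M$ with $\sigma \subseteq \tilde\sigma$, and then $\hat\sigma := \tilde\sigma - \sigma$ lies in $\hatM$ by construction. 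For the second corollary, point 4 implies that any two elements of $\{\tilde\sigma - \sigma \mid \tilde\sigma \in \tildeM, \sigma \in \M, \sigma \subseteq \tilde\sigma\}$ are $\eqpa{\upp{\pa}}$-equivalent, and the decomposition just obtained shows that this set is precisely $\hatM$.

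The only genuine subtlety I foresee is pinning down the convention for multiset difference in points 3 and 4, since under a max-based definition the ``only if'' direction would fail (for instance, $\sigma - \rho$ could equal $\sigma$ without $\rho$ being empty, decoupling equivalence of differences from equivalence of subtrahends). Once the convention is fixed, everything else is routine block-by-block arithmetic together with the small combinatorial selection step used to realise each $\tilde\sigma \in \tildeM$ as a sum $\sigma + \hat\sigma$; I do not anticipate any deeper obstacle.
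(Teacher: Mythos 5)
Your proof is correct; note that the paper itself omits the proof of this lemma as ``straightforward,'' and the block-by-block cumulative-multiplicity argument you give (reducing every claim to cancellation of $\M(\Y)$-values over the nonnegative integers, then realising each $\tilde\sigma\in\tildeM$ as $\sigma+\hat\sigma$ by selecting a sub-multiset of total multiplicity $\M(\Y)$ within each block $\Y$) is exactly the intended one. You are also right to flag the conventions as the only real content: the statement needs $\cup$ read additively (matching the paper's use of $+$ for multiset union) and $\sigma-\rho$ treated as defined only when $\rho\subseteq\sigma$, since under max-based readings the ``only if'' directions of points 1 and 3 fail.
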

We omit the proof of this lemma as it is straightforward.

With this, we can prove that for any two distinct blocks of the multiset lifting, $\M,\tildeM\in \upp{\pa}$ and for any $\sigma \in \M$ it holds that
        \begin{equation}\label{eq:p1}
        q[\sigma,\tildeM]= \sum_{\barM\in\upp{\pa}}\sum_{\stackrel{\rho\in\barM}{\rho \subseteq \sigma}}\prod_{S \in \rho}\binom{\sigma(S)}{\rho(S)}\cdot\rr[\rho,\Mgt{\barM}{\M}{\tildeM}]
        \ ,
        \end{equation}
thus expressing the aggregate rate in terms of the source state $\sigma$ and quantities depending on the multiset lifting. This is formally stated as follows.

\begin{proposition}\label{prop:propertiesOfLifting}
Let $(\Sp,\R)$ be a reaction network, $\pa$ a partition of $\mathcal S$, and $\upp{\pa}$ its multiset lifting.
Further, let $\M,\tildeM\in \upp{\pa}$ such that $\M \neq \tildeM$. Then, for any $\sigma \in \M$ it holds that
\[q[\sigma,\tildeM]= \sum_{\barM\in\upp{\pa}}\sum_{\stackrel{\rho\in\barM}{\rho \subseteq \sigma}}\prod_{S \in \rho}\binom{\sigma(S)}{\rho(S)}\cdot\rr[\rho,\Mgt{\barM}{\M}{\tildeM}]
\ .\]
\end{proposition}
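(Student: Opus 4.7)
The plan is to expand the left-hand side by unfolding the definitions of $q$ and $\Next$, reorganise the resulting sum over reactions by grouping reagents according to the block of $\upp{\pa}$ they belong to, and then appeal to the closure properties of multiset lifting in Lemma~\ref{rm:propertiesOfLifting} to show that the state-dependent membership condition $\sigma - \rho + \pi \in \tildeM$ can be replaced by the structural condition $\pi \in \Mgt{\barM}{\M}{\tildeM}$. A final bookkeeping step deals with the piecewise definition of $\rr$.

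Concretely, I would first rewrite
\[q[\sigma,\tildeM] \;=\; \sum_{(\rho\act{\rate}\pi)\in\R,\; \rho\subseteq\sigma,\; \sigma-\rho+\pi\in\tildeM} \rate\,\prod_{S\in\rho}\binom{\sigma(S)}{\rho(S)},\]
observing that each transition $\sigma \to \theta$ with $\theta \in \tildeM$ is produced by exactly those reactions $(\rho\act{\rate}\pi)$ for which $\rho \subseteq \sigma$ and $\sigma - \rho + \pi = \theta$. Partitioning the outer index according to the unique $\barM \in \upp{\pa}$ containing $\rho$, and pulling the binomial factor (which depends only on $\sigma$ and $\rho$) out of the innermost sum, reduces the task to proving
\[\sum_{(\rho\act{\rate}\pi)\in\R,\; \sigma-\rho+\pi\in\tildeM}\rate \;=\; \rr[\rho,\Mgt{\barM}{\M}{\tildeM}]\]
for every fixed $\sigma \in \M$, $\barM \in \upp{\pa}$ and $\rho \in \barM$ with $\rho \subseteq \sigma$.

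The main obstacle is bridging the local condition on the left with the existentially quantified set on the right. Here I rely on items~3 and~4 of Lemma~\ref{rm:propertiesOfLifting}: whenever $\sigma,\sigma' \in \M$ and $\rho,\rho' \in \barM$ with $\rho \subseteq \sigma$ and $\rho' \subseteq \sigma'$, the multisets $\sigma - \rho$ and $\sigma' - \rho'$ belong to the same block of $\upp{\pa}$; by items~1--2, adding the same $\pi$ to both preserves the block. Hence $\sigma - \rho + \pi \in \tildeM$ holds for my fixed $\sigma,\rho$ if and only if there exist witnesses $\sigma',\rho'$ with $\sigma' - \rho' + \pi \in \tildeM$, which is exactly the defining property of $\pi \in \Mgt{\barM}{\M}{\tildeM}$. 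This gives the set equality I need between the families of $\pi$'s appearing on the two sides.

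The remaining delicate point is the piecewise definition of $\rr$: the identity $\rr(\rho,\pi) = \sum_{(\rho\act{\rate}\pi)\in\R}\rate$ is valid only when $\rho \neq \pi$. Since $\sigma \in \M \neq \tildeM$ implies $\sigma \notin \tildeM$, the case $\pi = \rho$ falsifies $\sigma - \rho + \pi \in \tildeM$ and contributes nothing on the left. Applying the same equivalence argument to $\pi = \rho$ shows that $\rho \notin \Mgt{\barM}{\M}{\tildeM}$, so the self-loop term $\rr(\rho,\rho)$ does not enter the right-hand sum either. Swapping the order of summation between $\pi$ and the reactions with products equal to $\pi$ then closes the identity.
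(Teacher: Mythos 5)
Your proposal is correct and follows essentially the same route as the paper's proof: unfold $q[\sigma,\tildeM]$ into a sum over reactions, group the reagents by their block $\barM\in\upp{\pa}$, factor out the binomial coefficient, and invoke Lemma~\ref{rm:propertiesOfLifting} to replace the state-dependent condition $\sigma-\rho+\pi\in\tildeM$ by the witness-independent condition $\pi\in\Mgt{\barM}{\M}{\tildeM}$. Your final step, checking that $\rho\notin\Mgt{\barM}{\M}{\tildeM}$ so that the diagonal branch of the piecewise definition of $\rr$ never enters, is a detail the paper's proof leaves implicit, and it is a worthwhile addition.
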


\begin{proof}
By  definition we have
\begingroup
\allowdisplaybreaks
\begin{align}
q[\sigma,\tildeM]
&= \sum_{\theta \in \tildeM}\sum_{(\sigma \act{\lambda} \theta) \in \Next(\sigma)} \lambda \notag
\\
&= \sum_{\theta \in \tildeM}\sum_{\stackrel{(\rho \act{\rate} \pi) \in \R}{(\sigma -\rho+ \pi)=\theta}}  \prod_{S \in \rho}\binom{\sigma(S)}{\rho(S)}\cdot \rate \notag \\
& = \sum_{\stackrel{(\rho \act{\rate} \pi) \in \R}{(\sigma -\rho+ \pi)\in\tildeM}} \prod_{S \in \rho}\binom{\sigma(S)}{\rho(S)}\cdot \rate \notag
\end{align}
We now explicitly sum over all possible reagents $\rho$ in $\ms(\Sp)$, partitioning them according to $\upp{\pa}$. In addition, we restrict to those $\rho$ contained in $\sigma$, as they are the only ones actually considered in the above summation. This gives
\[
q[\sigma,\tildeM] = \sum_{\barM\in\upp{\pa}}\sum_{\stackrel{\rho\in\barM}{\rho\subseteq\sigma}}\sum_{\stackrel{(\rho \act{\rate} \pi) \in \R}{(\sigma -\rho+ \pi)\in\tildeM}} \prod_{S \in \rho}\binom{\sigma(S)}{\rho(S)}\cdot \rate
\]
The product above does not depend on each reaction considered in the innermost summation, but on the reagents $\rho$. Therefore we can factor out the product, obtaining
\begin{align*}
q[\sigma,\tildeM] = \sum_{\barM\in\upp{\pa}}\sum_{\stackrel{\rho\in\barM}{\rho\subseteq\sigma}}\prod_{S \in \rho}\binom{\sigma(S)}{\rho(S)} \!\!\!\!\!\!\! \sum_{\stackrel{(\rho \act{\rate} \pi) \in \R}{(\sigma -\rho+ \pi)\in\tildeM}}  \rate
\end{align*}
By using Lemma~\ref{rm:propertiesOfLifting}, this can be rewritten as:
\begin{align*}
q[\sigma,\tildeM] & = \sum_{\barM\in\upp{\pa}}\sum_{\stackrel{\rho\in\barM}{\rho \subseteq \sigma}}\prod_{S \in \rho}\binom{\sigma(S)}{\rho(S)}\sum_{\pi\in\Mgt{\barM}{\M}{\tildeM}}\sum_{(\rho \act{\rate} \pi) \in \R} \rate
\\
&= \sum_{\barM\in\upp{\pa}}\sum_{\stackrel{\rho\in\barM}{\rho \subseteq \sigma}}\prod_{S \in \rho}\binom{\sigma(S)}{\rho(S)}\cdot\rr[\rho,\Mgt{\barM}{\M}{\tildeM}] \notag
\end{align*}
\endgroup
\end{proof}

The next auxiliary fact is a purely combinatorial result whose role is to express $q[\sigma,\tildeM]$ in a way that does not depend on the source state $\sigma$, but only on the block of the multiset lifting to which it belongs.
\begin{proposition}\label{BinomEquality}
Let $\Sp$  be a set, $\pa$ a partition of $\Sp$, and $\upp{\pa}$ its multiset lifting.
For any $\M, \tildeM \in \upp{\pa}$, for any $\sigma \in \M$, it holds that
\[
\sum_{\rho\in\tildeM}\prod_{S \in \rho}\binom{\sigma(S)}{\rho(S)}  = \prod_{\Y \in \pa} \binom{\M(\Y)}{\tildeM (\Y)}\label{eq:BinomEq}
\]
\end{proposition}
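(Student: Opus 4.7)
The plan is to prove the identity by applying the Chu--Vandermonde identity block-by-block and then interchanging a product of sums with a sum of products, using the definition of the multiset lifting $\upp{\pa}$ to identify the resulting index set with $\tildeM$.

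First I would fix $\sigma \in \M$ and recall that, since $\M \in \upp{\pa}$, one has $\sigma(\Y) = \M(\Y)$ for every block $\Y \in \pa$. For each such $\Y$, writing $\Y = \{S_1^\Y, \ldots, S_{k_\Y}^\Y\}$, the Chu--Vandermonde identity applied to the decomposition $\M(\Y) = \sum_{S \in \Y} \sigma(S)$ yields
\[
\binom{\M(\Y)}{\tildeM(\Y)} \;=\; \sum_{\substack{(m_S)_{S \in \Y} \\ \sum_{S \in \Y} m_S = \tildeM(\Y)}} \prod_{S \in \Y} \binom{\sigma(S)}{m_S}.
\]
Taking the product over all blocks $\Y \in \pa$ expresses the right-hand side of the claim as a product (over blocks) of sums (over per-block assignments).

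Second, I would distribute this product of sums into a single sum of products. A choice, for each $\Y$, of a tuple $(m_S^\Y)_{S \in \Y}$ with $\sum_{S \in \Y} m_S^\Y = \tildeM(\Y)$ is in bijection with a single multiset $\rho \in \ms(\Sp)$ defined by $\rho(S) = m_S^\Y$ whenever $S \in \Y$. Under this bijection, the joint constraint ``$\rho(\Y) = \tildeM(\Y)$ for every $\Y \in \pa$'' is exactly the defining condition for $\rho$ to belong to the block $\tildeM$ of $\upp{\pa}$. The inner product reassembles as $\prod_{\Y \in \pa}\prod_{S \in \Y}\binom{\sigma(S)}{\rho(S)} = \prod_{S \in \Sp}\binom{\sigma(S)}{\rho(S)}$, and restricting further to $S \in \rho$ costs nothing because $\binom{\sigma(S)}{0} = 1$. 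This produces precisely $\sum_{\rho \in \tildeM}\prod_{S \in \rho}\binom{\sigma(S)}{\rho(S)}$, i.e., the left-hand side.

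The only subtlety—which I expect to be the most error-prone step rather than a true obstacle—is the edge case where $\tildeM(\Y) > \M(\Y)$ for some block $\Y$. On the right, the factor $\binom{\M(\Y)}{\tildeM(\Y)}$ vanishes under the standard convention. On the left, any $\rho \in \tildeM$ must then have $\rho(S) > \sigma(S)$ for at least one $S \in \Y$, so the corresponding binomial factor is zero and the whole summand vanishes. Both sides are therefore $0$ and the identity still holds. With this case dispatched, the proof is a transparent combination of Chu--Vandermonde and the distributive law, together with the characterisation of blocks of $\upp{\pa}$ in terms of their per-block cumulative multiplicities.
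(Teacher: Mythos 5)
Your proof is correct and is essentially the paper's own argument run in reverse: the paper starts from the left-hand side, groups the factors by block, interchanges the sum and product via the projections $(\tildeM)_{|\Y}$, and then applies Vandermonde's identity per block, whereas you expand the right-hand side by Chu--Vandermonde and then distribute. The two key ingredients (block-wise Vandermonde and the sum--product interchange, with the index set identified as $\tildeM$ via the defining condition $\rho(\Y)=\tildeM(\Y)$) are identical, and your explicit treatment of the degenerate case $\tildeM(\Y)>\M(\Y)$ is a small welcome addition the paper leaves implicit.
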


\begin{proof}
We can rewrite the left-hand side of the above equation as follows:
\begin{align}
	\sum_{\rho\in\tildeM}\prod_{\Y \in \pa}\prod_{\stackrel{S \in \rho}{S\in \Y}}\binom{\sigma(S)}{\rho(S)} \nonumber
\end{align}
which by using $(\tildeM)_{ | \Y}$ to denote the set obtained by projecting each $\rho\in\tildeM$ to $\Y$, can be further rewritten as
\[
\prod_{\Y \in \pa} \sum_{\rho' \in (\tildeM)_{ | \Y}} \prod_{X \in \Y} \binom{\sigma(X)}{\rho'(X)}
\]
We note that taking elements $\rho'\in(\tildeM)_{ | \Y}$ instead of taking each $\rho\in\tildeM$ and projecting it to $\Y$  makes sure that each distinct multiset $\rho'$ is counted only once in the above equation.

Now, for any $\rho'\in(\tildeM)_{ | \Y}$ we have that $\sum_{S\in\Y}\rho'(S)=\rho'(\Y) = (\tildeM)_{| \Y}(\Y)= \tildeM(\Y)$ is constant for each $\Y\in\pa$, just as $\sum_{X\in\Y}\sigma(X)=\sigma(\Y) = \M(\Y)$ for all $\sigma \in \M$. This allows us to apply Vandermonde's identity, obtaining:
\begin{align*}
\prod_{\Y \in \pa} \sum_{\rho' \in (\tildeM)_{ | \Y}} \prod_{S \in \Y} \binom{\sigma(S)}{\rho'(S)}  &=
 \prod_{\Y \in \pa} \binom{\sum_{S\in\Y}\sigma(S)}{\sum_{S\in\Y}\rho'(S)}
= \prod_{\Y \in \pa} \binom{\sigma(\Y)}{\rho'(\Y)}
= \prod_{\Y \in \pa} \binom{\M(\Y)}{\tildeM (\Y)}
\end{align*}
This closes the proof.
\end{proof}

Then, to formally prove that \se\ is a sufficient condition for ordinary lumpability we need to show that for any two blocks of the multiset lifting $\M,\tildeM \in \upp{\pa}$ and for any two states $\sigma,\sigma' \in \M$, we have that
\[q[\sigma,\tildeM]=q[\sigma',\tildeM].\]

Indeed, let us assume that $\M \not = \tildeM$. Due to the properties of SE, we can factor out the reaction-rate quantities in Eq.~\eqref{eq:p1}. Using $\rho^{\barM}$ to denote any element of $\barM$, we obtain
\begin{align}
q[\sigma,\tildeM]
& = \sum_{\barM\in\upp{\pa}}\rr[\rho^{\barM},\Mgt{\barM}{\M}{\tildeM}]\sum_{\stackrel{\rho\in\barM}{\rho \subseteq \sigma}}\prod_{S \in \rho}\binom{\sigma(S)}{\rho(S)} \notag\\
& =\sum_{\barM\in\upp{\pa}}\rr[\rho^{\barM},\Mgt{\barM}{\M}{\tildeM}]\sum_{\rho\in\barM}\prod_{S \in \rho}\binom{\sigma(S)}{\rho(S)}  \notag\\
& = \sum_{\barM\in\upp{\pa}}\rr[\rho^{\barM},\Mgt{\barM}{\M}{\tildeM}] \prod_{\Y \in \pa} \binom{\M(\Y)}{\barM(\Y)} , \label{eq:suffsigma} 
\end{align}
where the last equality follows by Proposition~\ref{BinomEquality}. 
This closes the case, because the terms appearing on the right-hand side of Eq.~\eqref{eq:suffsigma} do not depend on $\sigma$ or $\sigma'$. Since the case $\M = \tildeM$ follows from the case $\M \not = \tildeM$, see~\cite[Proposition 1]{DBLP:conf/tacas/ValmariF10}, we infer the sufficiency of SE.


\paragraph*{\se\ is a necessary condition for ordinary lumpability} Our necessary condition states that if a multiset lifting $\upp{\pa}$ is an ordinary lumpable partition of the underlying \ctmc{} for any initial state $\hat{\sigma}$, then $\pa$ is an \sbe. To prove this statement for any initial state, it is convenient to consider the \ctmc\ where the state space is the whole set of finite multisets of species $\ms(\Sp)$, since this will include the state space generated by any initial condition. We will thus denote by $\MC(\Sp,\R)$ the \ctmc\ of the reaction network with species $\Sp$ and reactions $\R$ on the whole state space. Moreover, for any $m \geq 1$ and $\bowtie\ \in \{ =, \leq \}$, let $\R_{\bowtie m}$ denote the subset of reactions of $\R$ where the multiplicity of the reagents is at most $m$, i.e., $\R_{\bowtie m} = \{ (\rho \act{\alpha} \pi) \in \R \mid  |\rho| \bowtie m\}$. We denote by  $q_{\bowtie m}(\sigma,\theta)$ the \ctmc\ transition rate from $\sigma$ into $\theta$ in $\MC(\Sp,\R_{\bowtie m})$. 

Then, to prove the necessary condition we need to show that for all states $\rho$, $\rho' \in \MS(\Sp)$ and for all blocks  $\M,\tilde{\M} \in\upp{\pa}$ we have that
$\rho,\rho' \in \M \text{ implies } \rr[\rho,\tildeM] = \rr[\rho',\tildeM]$.

To this end, we proceed by induction on $|\rho| = m$:
\begin{itemize}
        \item $|\rho| = 1$: Then $\rr[\rho,\tildeM] = q[\rho,\tildeM] = q[\rho',\tildeM] = \rr[\rho',\tildeM]$, where the second identity follows from the assumption.
        \item $|\rho| = m + 1$: Thanks to induction hypothesis, $\pa$ is an SE of $(\Sp,\R_{\leq m})$. By applying the result of sufficient condition to $(\Sp,\R_{\leq m})$ and $\pa$, we infer that $\upp{\pa}$ is an ordinary lumpability of $\MC(\Sp,\R_{\leq m})$. Since this implies that $q_{\leq m}[\rho,\tildeM] = q_{\leq m}[\rho',\tildeM]$, the identities
            \begin{align*}
            q[\rho,\tildeM] & = q_{=m+1}[\rho,\tildeM] + q_{\leq m}[\rho,\tildeM] &
            \text{and} &  &
            q[\rho',\tildeM] & = q_{=m+1}[\rho',\tildeM] + q_{\leq m}[\rho',\tildeM]
            \end{align*}
            ensure $q_{=m+1}[\rho,\tildeM] = q_{=m+1}[\rho',\tildeM]$ because $q[\rho,\tildeM] = q[\rho',\tildeM]$ by assumption. This, in turn, yields
        \begin{align*}
        \rr[\rho,\tildeM] = q_{=m+1}[\rho,\tildeM] = q_{=m+1}[\rho',\tildeM] = \rr[\rho',\tildeM],
        \end{align*}
\end{itemize}
which completes our proof.

\subsection{Reduced reaction network up to species equivalence}
\label{A.2}

Given a partition $\pa$ of $\cal S$, a block $\Y\in\pa$ and a species $S\in\Y$, we use $\reduce{S}{\pa}$ to denote the canonical representative of the block $\Y$. For a multiset of species $\rho$, we set $\reduce{\rho}{\pa} = \sum_{S \in \rho} \reduce{S}{\pa}$ as the \emph{representative multiset} obtained replacing each species with its canonical representative.
Similarly, for any set of multi-sets $G\subseteq \MS(\Sp)$ we use $\reduce{G}{\pa} = \{\reduce{\rho}{\pa}\mid \rho\in G\}$ to denote the set of $\pa$-reduced multisets of $G$.
Any block of multisets $\M$ in the lifting $\upp{\pa}$ has a unique $\pa$-reduced representative which we denote by $\reduce{\M}{\pa}$. We use $\reduce{\mathcal R}{\pa}$ to denote the set of reactions in the reduced reaction network.

If $\pa$ is an \se, by definition the cumulative value of $\rr$ from a multiset $\rho$ to a block $\M$ of the multiset lifting does not change if species $S$ in $\rho = S + \rho'$ is replaced by its representative $\reduce{S}{\pa}$ from the same partition $\Y \in \pa$, that is:
\begin{align*}
\rr[S+\rho',\M]=
\rr[\reduce{S}{\pa}+\rho',\M].
\end{align*}
Then, by induction it follows that
\begin{align*}
 \rr[\rho,\M]=
\rr[\reduce{\rho}{\pa},\M]
\end{align*}
Furthermore, in the case $\rho\not\in\M$, by definition of reaction rate we have that
\[
\rr[\reduce{\rho}{\pa},\M] = \sum_{\pi \in \M}\sum_{(\reduce{\rho}{\pa}\act{\alpha}\pi)\in \mathcal{R}}\alpha
\]
Note that the inner summation in the above equation is  across reactions whose reagents contain only representative species.
The outer summation sums across products that have the same multiset representative by construction. Hence, by the definition of reduced \rn, overall we have
\begin{equation}\label{old:prop:rrreduced}
\rr[\rho,\M] = \rr[\reduce{\rho}{\pa},\M] = \reduce{\rr}{\pa}(\reduce{\rho}{\pa}, \reduce{\M}{\pa}
)
\end{equation}
where by $\reduce{\rr}{\pa}$ we denote the computation occurring in the reduced reaction network.

By using Equation~\eqref{old:prop:rrreduced} as an intermediate step, we now show that the reduced network yields the lumped \ctmc\ by  proving that the aggregate rate from any state of the original \ctmc\ toward any block of the ordinarily lumpable partition corresponds to the single transition rate from their respective representative multisets, formally
\begin{align}\label{eq:claimRed}
q[\sigma,\tildeM]=
\reduce{q}{\pa}(\reduce{\sigma}{\pa},\reduce{\tildeM}{\pa})
\quad \text{ for any }\tildeM \in \upp{\pa}\ ,
\end{align}
where $\reduce{q}{\pa}$ indicates the values computed for the reduced reaction network. To see this, we proceed by case distinction on $\tildeM\in\upp{\pa}$ such that either $\sigma \in \tildeM$ or  $\sigma \not\in \tildeM$.
We start with the latter case, for which we also have  $\reduce{\sigma}{\pa}\not=\reduce{\tildeM}{\pa}$. By using Proposition~\ref{prop:propertiesOfLifting}, we know that
\begin{align}\label{eq:qAsrr}
q[\sigma,\tildeM] = \sum_{\barM\in\upp{\pa}}\sum_{\stackrel{\rho\in\barM}{\rho \subseteq \sigma}}\prod_{S \in \rho}\binom{\sigma(S)}{\rho(S)}\cdot\rr[\rho,\Mgt{\barM}{\M}{\tildeM}]
\end{align}

By definition of SE, using $\rho^{\barM}$ to denote any $\rho\in\barM$, we can rewrite Eq.~\eqref{eq:qAsrr} as
\begin{equation} \label{eq:reducedModelHasLumpedCTMC}
q[\sigma,\tildeM] = \sum_{\barM\in\upp{\pa}}\rr[\rho^{\barM},\Mgt{\barM}{\M}{\tildeM}]\sum_{\stackrel{\rho\in\barM}{\rho \subseteq \sigma}}\prod_{S \in \rho}\binom{\sigma(S)}{\rho(S)}
\end{equation}

Considering instead $\reduce{q}{\pa}(\reduce{\sigma}{\pa},\reduce{\tildeM}{\pa})$, 
by definition we have 
\begin{equation}\label{eq:redCTMCX}
\reduce{q}{\pa}(\reduce{\sigma}{\pa},\reduce{\tildeM}{\pa})
 = \sum_{(\reduce{\sigma}{\pa} \act{\lambda} \reduce{\tildeM}{\pa}) \in \Next(\reduce{\sigma}{\pa})} \lambda
 = \sum_{\stackrel{(\rho \act{\rate} \pi) \in \reduce{\mathcal R}{\pa}}{(\reduce{\sigma}{\pa} -\rho+ \pi)=\reduce{\tildeM}{\pa}}} \!\! \prod_{S \in \rho}\binom{\reduce{\sigma}{\pa}(S)}{\rho(S)}\cdot \rate
\end{equation}

We obtain our claim from Eq.~\eqref{eq:claimRed} by showing that Eq.~\eqref{eq:reducedModelHasLumpedCTMC} is equal to Eq.~\eqref{eq:redCTMCX}. 
Indeed, Eq.~\eqref{eq:redCTMCX} can be rewritten as
\[
\reduce{q}{\pa}(\reduce{\sigma}{\pa},\reduce{\tildeM}{\pa}) = \sum_{\barM\in\upp{\pa}}\sum_{\stackrel{\rho\in\barM}{\rho\subseteq\reduce{\sigma}{\pa}}} \prod_{S \in \rho}\binom{\reduce{\sigma}{\pa}(S)}{\rho(S)}
  \sum_{\stackrel{(\rho \act{\rate} \pi) \in \reduce{\mathcal R}{\pa}}{(\reduce{\sigma}{\pa} -\rho+ \pi)=\reduce{\tildeM}{\pa}}} \!\!\!\!\!\!\!\! \rate.
\]
If $\rho\subseteq \reduce{\sigma}{\pa}$, then $\rho=\reduce{\rho}{\R}$, therefore it holds that 
\begin{align}
\reduce{q}{\pa}(\reduce{\sigma}{\pa},\reduce{\tildeM}{\pa})
& = \sum_{\barM\in\upp{\pa}}\sum_{\stackrel{\rho\in\barM}{\rho\subseteq\reduce{\sigma}{\pa}}} \prod_{S \in \reduce{\rho}{\pa}}\binom{\reduce{\sigma}{\pa}(S)}{\reduce{\rho}{\pa}(S)}
  \sum_{\stackrel{(\reduce{\rho}{\pa} \act{\rate} \pi) \in \reduce{\mathcal R}{\pa}}{(\reduce{\sigma}{\pa} -\reduce{\rho}{\pa}+ \pi)=\reduce{\tildeM}{\pa}}} \!\!\!\!\!\!\!\! \rate \notag\\
& = \sum_{\barM\in\upp{\pa}}\sum_{\stackrel{\rho\in\barM}{\rho\subseteq\reduce{\sigma}{\pa}}}
\prod_{S\in \reduce{\rho}{\pa}}\binom{\reduce{\sigma}{\pa}(S)}{\reduce{\rho}{\pa}(S)}
  \cdot\reduce{\rr}{\pa}(\reduce{\rho}{\pa},\hat\pi),\label{eq:reduceQIntermediateStep}
\end{align}
 where in the last equality we used the fact that for any
$\reduce{\rho}{\pa}\subseteq\reduce{\sigma}{\pa}$
there exists one $\hat\pi\in\ms(\mathcal S)$ such that $(\reduce{\sigma}{\pa}- \reduce{\rho}{\pa} + \hat\pi) = \reduce{\tildeM}{\pa}$
%
(such $\hat\pi$ might not exist, in which case the reaction rate is equal to $0$). 
Furthermore, we know that $\hat\pi$ belongs to $\Mgt{\barM}{\M}{\tildeM}$, and, in particular, that $\reduce{(\Mgt{\barM}{\M}{\tildeM})}{\pa}=\hat\pi$.

Now we can use Equation~\eqref{old:prop:rrreduced} to further rewrite Eq.~\eqref{eq:reduceQIntermediateStep} as follows
\[
\reduce{q}{\pa}(\reduce{\sigma}{\pa},\reduce{\tildeM}{\pa}) = \sum_{\barM\in\upp{\pa}}\sum_{\stackrel{\rho\in\barM}{\rho\subseteq\reduce{\sigma}{\pa}}} \prod_{S \in \reduce{\rho}{\pa}}\binom{\reduce{\sigma}{\pa}(S)}{\reduce{\rho}{\pa}(S)}
   \cdot
   \rr[\reduce{\rho}{\pa},\Mgt{\barM}{\M}{\tildeM}]
\]
which, by the definition of reduced network 
gives
\[
\reduce{q}{\pa}(\reduce{\sigma}{\pa},\reduce{\tildeM}{\pa}) = \sum_{\barM\in\upp{\pa}}\sum_{\stackrel{\rho\in\barM}{\rho\subseteq\sigma}} \prod_{S \in \rho}\binom{\sigma(S)}{\rho(S)}
   \cdot
   \rr[\rho^{\barM},\Mgt{\barM}{\M}{\tildeM}]
\]
Finally, by the properties of SE we can factor out the $\rr$ after the first summation, obtaining
\begin{align}\label{eq:redCTMCx-4}
\reduce{q}{\pa}(\reduce{\sigma}{\pa},\reduce{\tildeM}{\pa}) = \sum_{\barM\in\upp{\pa}}
\rr[\rho^{\barM},\Mgt{\barM}{\M}{\tildeM}]\cdot
\sum_{\stackrel{\rho\in\barM}{\rho\subseteq\sigma}} \prod_{S \in \rho}\binom{\sigma(S)}{\rho(S)}
\end{align}

This completes the case $\sigma \notin \tildeM$, as the $\Mgt{\barM}{\M}{\tildeM}$ considered in Eq.~\eqref{eq:redCTMCx-4} correspond to the $\Mgt{\barM}{\M}{\tildeM}$ considered in Eq.~\eqref{eq:reducedModelHasLumpedCTMC}.

We now present the case  $\sigma \in \tildeM$. First, we show that
the following equation holds:
\begin{align}\label{eq:qselfloop}
q[\sigma,\tildeM] = - \sum_{\hatM \in \upp{\pa},\ \tildeM \not= \hatM} q[\sigma,\hatM]
\end{align}
In particular, given that $\sigma \in \tildeM$, 
we have
\begin{align*}
q[\sigma,\tildeM] = q[\sigma,\tildeM\setminus \{\sigma\}] + q(\sigma,\sigma)\;.
\end{align*}
By definition, 
we have that
$q(\sigma,\sigma) = - q[\sigma,\MS(\mathcal S) \setminus \{\sigma\}]$.
If we partition $\ms(\mathcal S)$ according to $\upp{\pa}$, we obtain
\[
q(\sigma,\sigma) = - q[\sigma,\tildeM \setminus \{\sigma\}]  - \sum_{\hatM \in \upp{\pa},\ \tildeM \not= \hatM} q[\sigma,\hatM]
\]
which proves that Eq.~\eqref{eq:qselfloop} holds.
Now, from the case $\sigma \notin \tildeM$, it follows that
\[
- \sum_{\hatM \in \upp{\pa},\  \tildeM \not= \hatM} q[\sigma,\hatM]
= - \sum_{\hatM \in \upp{\pa},\  \tildeM \not=  \hatM} \reduce{q}{\pa}[\reduce{\sigma}{\pa}, \reduce{\hatM}{\pa}]
\]
Considering that $\sigma \in\tildeM$ implies that $\reduce{\sigma}{\pa} = \reduce{\tildeM}{\pa}$, this resolves to
\[
- \sum_{\stackrel{\reduce{\hatM}{\pa} \in \MS(\reduce{\mathcal S}{\pa})}{ \reduce{\sigma}{\pa} \not=  \reduce{\hatM}{\pa}}} \reduce{q}{\pa}(\reduce{\sigma}{\pa}, \reduce{\hatM}{\pa})
\]
This is by definition 
exactly $\reduce{q}{\pa}(\reduce{\sigma}{\pa}, \reduce{\sigma}{\pa})$.

\begin{figure}[t]
\centering
\begin{minipage}[c]{0.49\textwidth}
\centering
\begin{algorithmic}
\REQUIRE $\mathcal{R}$ reactions
\REQUIRE $\Reag(\mathcal{R})$ reagents
\ENSURE $\mathcal{R'}$ pre-processed reactions
\STATE $\mathcal{R}' \leftarrow \emptyset$
\FOR{$\rho \in \Reag(\mathcal R)$}
\STATE $tot \leftarrow 0$
\FOR{$(\rho \act{\rate}\pi) \in \mathcal R$}
\STATE $\text{add}((\rho\act{\rate}\pi),\mathcal{R}')$
\STATE $tot \leftarrow tot + \rate$
\ENDFOR
\STATE $\text{add}((\rho\act{-tot}\rho),\mathcal{R}')$
\ENDFOR
\end{algorithmic}
\vspace{4.95cm}
\end{minipage}
\hfill
\begin{minipage}[c]{0.49\textwidth}
\centering
\begin{algorithmic}
\REQUIRE $\mathcal{S}, \mathcal{R}$ reaction network
\REQUIRE $\mathcal{H}_0$ initial partition of $\mathcal{S}$
\ENSURE $\mathcal{H}$ the largest \se\ which refines $\mathcal{H}_0$
\STATE $\mathcal{H} \leftarrow \mathcal{H}_0$
\STATE $\A(\mathcal{R}) := \{\rho' \mid \exists S \in \mathcal{S} \text{~s.t.~} \exists(S + \rho') \in \Reag(\mathcal{R})\}$
\STATE $\upp{\pa}(\mathcal{R}) := \{ \M \in \upp{\pa} \mid \exists \pi \in \Prod({\mathcal{R}}) \text{~s.t.~} \pi \in \M \}$
\STATE $\mathit{Splitters} := \A(\mathcal{R}) \times \upp{\pa}(\mathcal{R})$
\WHILE{$\mathit{Splitters} \neq \emptyset$}
\STATE $(\rho, \M) \leftarrow \text{pop}(\mathit{Splitters})$
\FOR{$S \in \mathcal{S}$}
\STATE $S.\texttt{rr} \leftarrow 0$ 
\ENDFOR
\FOR{$\pi \in \M$}
\FOR{$(S + \rho \act{r} \pi) \in \pi.\texttt{inc}$}
\STATE $S.\texttt{rr} \leftarrow  S.\texttt{rr} + r$
\ENDFOR
\ENDFOR 
\STATE $\mathcal{H'} \leftarrow \text{split}(\mathcal{H}, \{ S.\texttt{rr} \mid S \in \mathcal{S}\})$\\
\IF{$\mathcal{H'} \neq \mathcal{H}$}
\STATE $\mathcal{H} \leftarrow \mathcal{H}'$
\STATE $\upp{\pa}(\mathcal{R}) := \{ \M \in \upp{\pa} \mid \exists \pi \in \Prod({\mathcal{R}}) \text{~s.t.~} \pi \in \M \}$
\STATE $\mathit{Splitters} := \A(\mathcal{R}) \times \upp{\pa}(\mathcal{R})$
\ENDIF
\ENDWHILE
\end{algorithmic}
\end{minipage}
\caption{Pre-processing (left); computation of the largest SE refining an initial partition (right).
}
\label{ls:preProceduresSMBANDproceduresSMB}
\end{figure}

\subsection{Existence and computation of the largest species equivalence}
\label{A.3}

%
We prove the existence of the largest \sbe\ by showing that the transitive closure of the union of two equivalence relations induced by an \sbe\ partition is still an \sbe.
To show this, we will use the notation $\eqpa{\pa}$ to denote the equivalence relation on $\Sp$ induced by the \sbe\ partition $\pa$. Formally,  given a  reaction network with species  $\Sp$ and reactions $\R$, a set of indices $I$, and an \se{} $\pa_i$  for all $i \in I$, we show that the transitive closure of their union $\eqpa{\pa} \ \equiv (\bigcup_{i \in I}\eqpa{\pa_i})^*$ induces 
an \se. We first note that $\eqpa{\pa}$ is an equivalence relation over $\Sp$ because it is the transitive closure of the union of equivalence relations over $\Sp$. For any $i \in I$, any block $\tilde\Y \in \pa_i$ is contained in a block $\Y \in \pa$, implying that any $\Y \in \pa$ is the union of blocks of $\pa_i$.
For any pair of equivalent species $(S_1,S_2) \in \ \eqpa{\pa}$, we have that $(S_1,S_2) \in (\bigcup_{i \in I}\eqpa{\pa_i})^n$, for some $n>0$,
where $(\bigcup_{i \in I}\eqpa{\pa_i})^n$ denotes the $n$-step transitive closure of the equivalence relations.

We now show that $\eqpa{\pa}$ is an \se{} by induction over $n$. Let   $\eqpa{{}}^n$ be $(\bigcup_{i \in I}\eqpa{\pa_i})^n$, and $\rho \in \ms(\mathcal{S})$. In the base case (i.e., $n=1$), we know that  $(S_1,S_2) \in \ \eqpa{{}}^1$ implies that $(S_1,S_2)\in \eqpa{\pa_i}$, for some $i\in I$. In order to prove that the condition required by \se{} holds, we use that for any $\Y\in \pa$ and any $i\in I$ we have that there exists some set of indices  $J^i$ such that $\Y=\bigcup_{j\in J^i}\tilde\Y_j$, with $\tilde\Y_j$ a block of $\pa_i$; hence, $\rr[S_1+\rho, \Y]=\sum_{j\in J^i}\rr[S_1+\rho, \tilde\Y_j]$. In the inductive step, we assume that the condition required by \se{} holds for $\eqpa{{}}^m$, $\forall m\!<\!n$. If $(S_1,S_2)\in\ \eqpa{{}}^n$, then there exists an $S_3\in\mathcal S$ such that
$(S_1,S_3)\in\ \eqpa{\pa_i}$ for some $i\!\in\!I$, and $(S_3, S_2)\in\ \eqpa{{}}^{n-1}$. Then, the claim follows from a similar argument as in the base case and the induction hypothesis.

\subsection{Computation of the largest species equivalence} 
\label{A.4}
Computing the largest \se\ can be encoded as a partition refinement problem~\cite{partitionref}, analogously to well-known algorithms for quantitative extensions of labeled transition systems in theoretical computer science~\cite{DBLP:journals/fuin/HuynhT92,DBLP:journals/jcss/BaierEM00,DBLP:conf/tacas/CardelliTTV16}. Hence, we only detail the conceptually novel parts. 

\paragraph*{Pre-processing} Throughout this section we assume that species that do not appear in any reaction are removed from the set of species $\mathcal{S}$; this can always be done in a pre-processing step.
Furthermore, we observe that the notion of reaction rate $\rr(\rho,\pi)$  is computed differently depending on whether $\rho\neq\pi$ or $\rho=\pi$. In particular,  the latter case is more complex to implement, as we have $\rr(\rho,\rho)=-\sum_{\pi'\neq\rho}\rr(\rho,\pi')$, requiring to consider all $\pi'$ different from $\rho$. Therefore, for a homogeneous and simpler treatment  we perform a preprocessing step that explicitly adds one \emph{self-loop} reaction $\rho\act{\rr(\rho,\rho)}\rho$ for each reagent $\rho\in\Reag(\R)$. This allows us to consider a simpler version of $\rr$ computed always according to the simpler case $\rho\neq\pi$. A pseudo-code for this is shown in Fig.~\ref{ls:preProceduresSMBANDproceduresSMB}(left), where we assume that each species $S\in\Sp$ is associated with a real-valued field, \texttt{rr}, used to compute reaction rates involving $S$ as reagent, and that each product $\pi\in\Prod(\mathcal{R})$ is provided with a  list, \texttt{inc}, which points to all the reactions that have $\pi$ as product.

Assuming that the list storing $\R$ is sorted according to a total lexicographical-like ordering on the reagents and products given by the ordering on species, then we have that
the preprocessing runs in $O(p\cdot r\cdot \log r)$ time. Indeed, we scan each reaction once. As discussed above, checking if the reagents of a reaction are the considered $\rho$ takes $O(p_r)$ time, 
as reagents and labels are stored as pairs (species,multiplicity) sorted according to a total ordering on species, and there are at most $p_r$ different species in the reagents and labels. If the reagents of the current reaction are equal to the currently considered reagents, we just add the reaction rate to \texttt{tot}, and the reaction to $\R'$, ignoring for the moment the sorting of $\R'$. Otherwise, we add to $\R'$ a new reaction with minus the computed cumulative reaction rate, an operation that takes constant time because reactions are stored as pointer data structures.
Once all reactions have been considered, we just have to sort the reactions in $\R'$, which takes $O(p\cdot r\cdot \log r)$ time.

\paragraph*{Algorithm} Our algorithm for computing the largest species equivalence of a reaction network is given in Fig.~\ref{ls:preProceduresSMBANDproceduresSMB}(right). 
At each iteration, every candidate partition $\pa$ (initialized with the input partition $\pa_0$) is associated with a set of splitters, consisting of pairs $(\rho,\M)$ 
where $\rho$ is any multiset that satisfies the condition in Eq.~1 in the main text, that is $\rho \in \A(\R) := \{\rho' \mid \exists S \in \mathcal{S} .\ \exists(S + \rho') \in \Reag(\R)\}$; $\M$ is any block of the multiset lifting $\upp{\pa}$ against which the condition Eq.~1 in the main text is checked, that is $\M \in \upp{\pa}(\mathcal{R}) := \{ \M \in \upp{\pa} \mid \exists \pi \in \Prod({\mathcal{R}}) \text{~such that~} \pi \in \M \}$.
The refinement of the partition occurs by computing the cumulative reaction rate $\sum_{\pi \in \mb} \rr(S + \rho, \pi)$, i.e., 
a side of Eq.~1 in the main text, for each splitter pair $(\rho,\M)$ and all species $S$. If all species in the same block have the same cumulative reaction rate for every splitter, then, by definition, the given candidate partition $\pa$ is indeed an \se\ and the algorithm terminates. Otherwise, if a splitter is such that a block of the candidate partition  contains species that have different cumulative reaction rates, then the block is refined into sub-blocks that have equal cumulative reaction rates. This leads to a further iteration of the algorithm that checks the new set of splitters arising from the as-refined candidate partition $\pa$.

We now analyze the time and space complexities of the algorithm.

\emph{Space complexity.}
 We assume that species and reactions are stored in data structures via pointers. The set of species $\cal S$ is stored as a list, while a block of species partition $\pa$ is a list of its species, each species in turn having a pointer to its block, requiring $O(s)$ space, where $s = | \mathcal{S} |$.
Also $\cal R$ is stored in a list of size $r = | \mathcal{R} |$. Each reaction consists of two lists in the form \texttt{(species, multiplicity)}, one for the reagents and one for the products, where the list for reagents is sorted according to a total ordering on species. 
Each list \texttt{inc} has size $O(r)$, 
while exactly $r$ entries appear in all \texttt{inc} lists.
Thus, storing $\cal R$ requires $O({p} \cdot {r})$ space, where $p$ is a bound on the maximum number of different species which have nonzero multiplicity across all reactions, that is, $p:=\max(p_r,p_p)$ where $p_r := \max \{ \sum_{S} \mathds{1}_{\{\rho(S) > 0\}} \mid \rho\in \Reag(\R)\}$ and $p_p := \max \{ \sum_S \mathds{1}_{\{\pi(S) > 0\}} \mid \pi\in\Prod(\mathcal{R})	\}$. We can bound $s$ by $O((p_r+p_p)\cdot r)=O(p\cdot r)$. This is because each reaction can have at most $p_r$ and $p_p$ different species as reagents and products, respectively. We observe that $p$ has upper bound equal to $s$; in practice, it ranged from $2$ to $4$ in all models herein considered.
Each element of $\A(\R)$ is not stored explicitly, but is represented implicitly by decreasing by 1 the multiplicity of a species in a multiset of reagents. Therefore, each element can be stored in constant space as a pair \texttt{(reagents,species)}, where \texttt{species} points to the species whose multiplicitly has to be assumed to be decreased by 1. For example, given the reagents $\rho=S_1+S_1+S_2$, we store $S_1 + S_2$ as $(\rho,S_1)$. 
An advantage of this representation is that we can compare two labels in $O(p_r)$ time even if different encodings are used for the same element. This is because reagents are lists of pairs \texttt{(species,multiplicity)} sorted with respect to the species.
Finally, $\A(\R)$ is stored in a sorted list too, requiring $O(l)$ space, where $l := |\A(\R)|$ which can be bound by
$O(p_r\cdot r)$, while insertions and searches cost $O(p_r\cdot \log l)$ time, because both operations require to compare $O(\log l)$ elements, each made by up to $p_r$ pairs.
In order to bound the size of the splitters to $O(\abs{\Prod(\mathcal{R})})=O(r)$  we do not explicitly store each pair $(\rho,\M)$. Instead we store only one, initialized with a reference to the first position of $\A(\R)$, and then update the pointer to the next position when necessary.
We store $\Prod(\mathcal{R})$ as a list,  requiring $O(r)$ space, while a partition of $\Prod(\mathcal{R})$ is encoded by representing a block with a list of pointers to its products, without worsening the space complexity. A partition of species is stored similarly.
In conclusion, the algorithm has an overall space complexity of $O(s + {p} \cdot {r})$, which can be bound by $O({p} \cdot {r})$.

\emph{Time complexity.}
Concerning time complexity, computing the partition $\upp{\pa}$ of $\Prod(\mathcal{R})$ according to the multi-set lifting of $\pa$ requires $O(s \cdot r \cdot (p_p+\log r))$ time, because it is
done by iteratively sorting the products in $\Prod(\mathcal{R})$ for each $\Y\in\pa$, according to the number of species in $\Y$ that they contain. There are at most $s$ blocks in $\pa$, and sorting the products costs $O(r\cdot p_p + r \cdot \log r)$ for each such block: for each element in $\Prod(\mathcal{R})$ it takes $O(p_p)$ time to count the number of species of $\Y$ in it, and sorting $\Prod(\mathcal{R})$ according to this value requires $O(r \cdot \log r)$ comparisons.
Then, a set \texttt{spls} of initial candidate splitters is generated for each $\rho \in \A(\R)$ and $\M \in \upp{\pa}$.

Computing the cumulative reaction rates is done by associating each  species $S$ with a real-valued field $S.$\texttt{rr} that is initialized to $0$, in $O(s)$ time.
Given a splitter $(\rho,\M)$, for each species $S$ the value $\rr[S+\rho,\M]$ is stored in $S.$\texttt{rr}  by iterating once the \texttt{inc} list of each $\pi \in \M$.
Checking for the presence of $\rho$ in the reagents of each reaction takes $O(p_r)$ time, since each multi-set is stored in a list sorted lexicographically.  Thus, the computation of the aggregate reaction rates has  $O(p_r\cdot r)$ time complexity, since each reaction appears in $\pi.$\texttt{inc} for one $\pi$ only.

Once the cumulative reaction rates are computed, the actual splitting is performed in the usual way, following, e.g.,~\cite{DBLP:journals/ipl/DerisaviHS03,DBLP:journals/jcss/BaierEM00}. It consists of the following three steps:
\begin{enumerate}
\item[(i)]
Each block is split using an associated balanced binary search tree (BST) in which each species $S$ of the block is inserted providing $\rr[S+\rho,\M]$ as key (stored in $S.$\texttt{rr}), and a new block is added to $\pa$ for each leaf of the BST; this requires $O(s\cdot \log s)$ time, as there are at most $s$ insertions in the BSTs, each having size at most $s$. BSTs do not worsen the space complexity, as only one for a block is built at a time.
%
\item[(ii)]
If at least one block has been split, all candidate splitters must be discarded; this takes $O(r)$ time, as \texttt{spls} contains at most an entry per product $\pi\in\Prod(\mathcal{R})$, since, for each block $\M$, only one entry is stored to represent all pairs $(\rho,\M)$); deletion from \texttt{spls} takes constant time assuming that it is implemented as a linked list.
\item[(iii)]
If at least a block has been split, all splitters have to be recomputed, which takes $O(s \cdot r \cdot (p_p+\log r))$ as previously discussed.
\end{enumerate}
In conclusion, overall the splitting procedure has time complexity
$O(s \cdot \log s + s\cdot r \cdot (p_p+\log r))$, which can be bound by $O(s \cdot r \cdot (p+\log r))$.
Indeed we have that:
\begin{align*}
O(s \cdot \log s + s\cdot r \cdot (p_p+\log r)) &= O(s \cdot \log s + s\cdot r \cdot p_p+ s\cdot r \cdot\log r)
\\
&= O(s (\log s + r \cdot p_p+ r \cdot\log r))
\\
&\leq O(s (r \cdot p+ r \cdot\log r))
\\
&= O(s\cdot r \cdot (p+ \log r))
\end{align*}
where the inequality follows from the fact that $O(\log s) < O(r\cdot p)$.

Finally, we observe that the splitting procedure  is invoked at most $O(l\cdot s\cdot r)$ times. This is because,
initially, $l\cdot r$ candidate splitters have to be considered. At every iteration where some blocks of $\pa$ are split (which happens at most $s$ times), all splitters are removed, and at most $l\cdot r$ new candidate ones are added to the set of splitters.
In conclusion, the overall computation of the largest \se\ takes $O(l\cdot s^2 \cdot r^2 \cdot (p +\log r))$ time and $O(p\cdot r)$ space.

\subsection{SIS model with heterogeneous rates} 
\label{A.5}
We consider a variant of the SIS model of the star network presented in \emph{Epidemic process in networks} and visualized in Fig. 4 in the main text. In this variant we assume node-dependent transmission and recovery rates that depend on whether the node is at the center or at the periphery of the star. More specifically, we consider the following mass-action \rn:
\begin{align*}
I_0 & \xrightarrow{\gamma_1} S_0  & I_1 & \xrightarrow{\gamma_2} S_1 &  I_2 & \xrightarrow{\gamma_2} S_2 &
I_3 & \xrightarrow{\gamma_2} S_3 & I_4 & \xrightarrow{\gamma_2} S_4 \notag \\
S_0 + I_1 & \xrightarrow{\beta_1} I_0 + I_1 & S_0 + I_2 & \xrightarrow{\beta_1} I_0 + I_2 &
S_0 + I_3 & \xrightarrow{\beta_1} I_0 + I_3 & S_0 + I_4 & \xrightarrow{\beta_1} I_0 + I_4 \label{eq:heterogeneous.sis} \\
S_1 + I_0 & \xrightarrow{\beta_2} I_1 + I_0 & S_2 + I_0 & \xrightarrow{\beta_2} I_2 + I_0 &
S_3 + I_0 & \xrightarrow{\beta_2} I_3 + I_0 & S_4 + I_0 & \xrightarrow{\beta_2} I_4 + I_0 \notag
\end{align*}
Using the parameter-independent network expansion method presented in the main text, it is possible to show that this model admits the same \se\ as in the main text, namely
$$\mathcal{H} = \big\{ \{S_0\}, \{I_0\}, \{S_1,S_2,S_3,S_4\}, \{I_1,I_2,I_3,I_4\} \big\},$$
for any $\gamma_1 \neq \gamma_2$ and $\beta_1 \neq \beta_2$.

\subsection{Comparison with syntactic Markovian bisimulation}
\label{A.6}

For completeness, we restate the definition of syntactic Markovian bisimulation (SMB) from~\cite{smbpaper} using the notation of this paper.
We start from the notion of reaction rate from~\cite{smbpaper}, which we call \emph{SMB-reaction rate} here in order to distinguish it from the main definition presented in this paper.

\begin{definition}[SMB-reaction rate (adapted from~\cite{smbpaper})]\label{def:SMBreactionRate}
Let $(\Sp,\R)$ be an \rn, and $\rho, \pi \in \MS(\Sp)$. The SMB-reaction rate from $\rho$ to $\pi$ is defined as
\[
\rr_{\mathit{SMB}}(\rho,\pi) =
\sum\limits_{(\rho \act{\rate} \pi) \in \R} \rate .
\]
For any $\M \subseteq \MS(\Sp)$, we define $\rr_{\mathit{SMB}}[\rho,\M] = \sum_{\pi\in \M} \rr_{\mathit{SMB}}(\rho,\pi)$.
\end{definition}

We now recall the notion of SMB.

\begin{definition}[SMB adapted from~\cite{smbpaper}]\label{def:smb}
Let $(\Sp,\R)$ be a reaction network, $\pa$ a partition of $\Sp$, and $\upp{\pa}$ its multiset lifting.
We say that $\pa$ is a syntactic Markovian bisimulation (SMB) for $(\Sp,\R)$ if and only if
\[\rr_{\mathit{SMB}}[S+\rho,\M] = \rr_{\mathit{SMB}}[S'+\rho,\M], \quad \text{for all~} \M,\tildeM \in \upp{\pa}, \text{all~} S,S'\in \tildeM, \text{and all~} \rho \in \MS(\Sp).\]
\end{definition}

We now provide a simple reaction network which shows that SMB is stricter than SE, the network consisting of the simple reaction  $S_1\act{1}S_2$.
We have that the partition consisting of only one block $\{S_1,S_2\}$ is an SE, but it is not an SMB.
Indeed, we have only one class of multi-set equivalent products, consisting of $\{S_1\}$ and $\{S_2\}$, with
\begin{align*}
\rr(S_1,\{\{S_1\},\{S_2\}\})&=0 \quad \text{ and }\quad \rr(S_2,\{\{S_1\},\{S_2\}\})=0
\\
\rr_{\mathit{SMB}}(S_1,\{\{S_1\},\{S_2\}\})&=1 \quad \text{ and }\quad \rr_{\mathit{SMB}}(S_2,\{\{S_1\},\{S_2\}\})=0
\end{align*}

\begin{table}[t]
\centering
\caption{Comparison of species equivalence (SE) with syntactic Markovian bisimulation (SMB).}
\label{smb_table1}
\scalebox{1.0}
{
\begin{tabular}{ccc rrHr rrHr}
\toprule
&&&\multicolumn{4}{c}{\emph{Number of species}} & \multicolumn{4}{c}{\emph{Number of reactions}} \\
\cmidrule(lr){4-7} \cmidrule(lr){8-11}
 \textit{Id} & \textit{Model} & \emph{Ref.} & \multicolumn{1}{c}{\textit{Orig.}} & \multicolumn{1}{c}{\textit{SE}} & \multicolumn{1}{H}{\textit{SE-cur}} & \multicolumn{1}{c}{\textit{SMB}} & \multicolumn{1}{c}{\textit{Orig.}} & \multicolumn{1}{c}{\textit{SE}} & \multicolumn{1}{H}{\textit{SE-cur}} & \multicolumn{1}{c}{\textit{SMB}}  \\ \midrule
 1  & fceri\_fyn\_trimer & \cite{Faeder01042003,citeulike:8493139} & \np{20881} & 834   & {834}   &  \np{20881}  & \np{407308} &  \np{7620}  &  {\np{7620}}  &  \np{407308} \\
 2  & fceri\_gamma2\_asym & \cite{Faeder01042003,citeulike:8493139} & \np{10734} & 351   & {351}   &  \np{3744}  & \np{187468} &  \np{2532}  &  {\np{2532}}  &  \np{50368} \\  
 3  & fceri\_fyn & \cite{Faeder01042003,citeulike:8493139} &  \np{1281}  & 154   & {154}   & { \np{1281}} &  \np{15256} & 900   & 900   & \np{15256} \\
 4  & Nag2009 & \cite{nag2009aggregation} & 920   & 364   & {364}   & {920} &  \np{12740} &  \np{3420}  &  {\np{3420}}  &  \np{12740}  \\ 
 5  & fceri\_lyn\_745 & \cite{Faeder01042003,citeulike:8493139} & 745   & 105   & {105}   & {745} & \np{ 8620}  & 576   & {576}   & \np{8620} \\
 6  & fceri\_ji & \cite{Faeder01042003,citeulike:8493139} & 354   & 105   & {105}   & {354} &  \np{3680}  & 576   & {576}   & { \np{3680}} \\
 7  & LipidRafts  
& \cite{barua2012mechanistic} & 348   & 215   & {215}   & {348} &  \np{3447}  &  \np{1782}  &  {\np{1782}}  & { \np{3447}} \\ 
 8 & NIHMS80246-S4 & \cite{borisov2008domain} & 213   & 66 & {66} & {213} &  \np{2230}  & 432   & {432}   & { \np{2230}} \\ 
 9 & NIHMS80246-S6 & \cite{borisov2008domain} & 24 & 3  & 3 & {24} & 88 & 2  & {2}  & {88} \\ 
\bottomrule
\end{tabular}
}
\end{table}

Table~\ref{smb_table1} shows  a number of models from the literature where, in practice, \se\ can aggregate more than syntactic Markovian bisimulation.

\subsection{Speeding up stochastic simulations with species equivalence}
\label{A.7}
We discuss how \se\ can reduce the runtimes of stochastic simulation algorithms.
In order to perform these tests on state-of-art stochastic simulation algorithms,  we used the StochKit simulation framework~\cite{DBLP:journals/bioinformatics/SanftWRFLP11}, performing experiments with the implementations of the original direct method (SSA) by Gillespie~\cite{Gillespie77},
the next-reaction method (NRM) by Gibson and Bruck~\cite{doi:10.1021/jp993732q},
 as well as the more recent Logarithmic Direct Method (LDM)~\cite{Li06logarithmicdirect} and Composition and Rejection (CR)~\cite{doi:10.1063/1.2919546}. The performances of such algorithms have been already compared, e.g.~\cite{doi:10.1063/1.2919546,Li06logarithmicdirect}.

For each simulation algorithm, the speed-up was measured as the ratio between the runtimes of 5 independent simulation of the original and the reduced networks, using the same time horizons and initial conditions provided in the original articles from which the models have been taken. This speed-up metric does not include the time to compute the reduced network by \se, which however turned out to be negligible because it took at most one twentieth of the analysis time of the reduced model.

\begin{table}[t]
	\caption{Analysis speed-ups for the models in Table~\ref{smb_table1} with different simulation algorithms.}
	\label{speedups_table}
\centering
	\scalebox{1.0}
	{
		\begin{tabular}{cHr r rrrr}
			\toprule
			\multicolumn{4}{c}{\emph{Model}} &\multicolumn{4}{c}{\emph{Speedup ratios original/reduced}}  \\
			\cmidrule(lr){1-4}
			\cmidrule(lr){5-8}
			\textit{Id} &\textit{Model} & \multicolumn{1}{c}{\emph{Horizon}} &
			\multicolumn{1}{c}{\emph{SE (s)}} &
			\multicolumn{1}{c}{\emph{SSA}} & \multicolumn{1}{c}{\emph{NRM}} & \multicolumn{1}{c}{\emph{LDM}} & \multicolumn{1}{c}{\emph{CR}}
			\\ \midrule
			1 & fceri\_fyn\_trimer & \np{3840}
			& 1.51E+1
			& \multicolumn{1}{c}{$\geq$\textit{654.2}} & \multicolumn{1}{c}{$\geq$\textit{971.1}} & \multicolumn{1}{c}{$\geq$\textit{187.4}} & \multicolumn{1}{c}{$\geq$\textit{\np{1310}.5}}
			\\
			2 & fceri\_gamma2\_asym & \np{3840}
			& 2.26E+0
			& \np{6730}.2 & 144.0 & \np{1070}.5 & 645.1
			\\
			3 & fceri\_fyn & \np{3840}
			& 2.69E--1
			& 33.9 & 6.2 & 8.2 & 4.7
			\\
			4 & Nag2009 & \np{200}
			& 3.26E--1
			& 3.4 & 2.6 & 3.1 & 2.3
			\\
			5 & fceri\_lyn\_745 & \np{3840}
			& 6.00E--2
			& 53.2 & 6.3 & 9.7 & 5.1
			\\
			6 & fceri\_ji & \np{3840}
			& 2.30E--2
			& 20.8 & 5.0 & 6.3 & 4.3
			\\
			7 & LipidRafts 
			& \np{3600}
			& 7.00E--2
			& 1.3 & 2.0 & 2.9 & 1.6
			\\
			8 & NIHMS80246-S4 & \np{40}
			& 1.50E--2
			& 4.3 & 3.9 & 3.8 & 3.1
			\\
			9 & NIHMS80246-S6 & \np{40}
			& 1.00E--3
			& 4.0 & 6.7 & 4.3 & 5.2
			\\
			\bottomrule
		\end{tabular}
	}
\end{table}

Table~\ref{speedups_table} shows the speed-up results by  SE on the models from Table~\ref{smb_table1}. As an indicator of the cost of the reduction, the third column (\emph{SE}) shows the execution times of SE as measured on commodity hardware (a laptop with 8\,GB RAM and a 3,1 GHz Dual-Core Intel Core i5). For the first model we report a lower bound on the speed-up because a single simulation of the original model did not terminate before \np{20000}\,s. An inspection of the cause of such a large execution time revealed that the simulation engine allocated  more memory than available, leading to frequent memory swaps that significantly degraded performance. Instead, upon reduction the simulation of the the same model took a few minutes on our machine.

The results also indicate that larger speed-ups can be achieved with the larger models of our dataset, which are also the more computationally demanding for stochastic simulations. As expected, the larger speed-ups are obtained when using the direct SSA method, however significant improvements of two-three orders of magnitude can be reported for model efficient methods such as NRL and LDM, which are designed to provide logarithmic time dependency on the number of reactions, or CR, which can offer a constant time dependency under certain conditions~\cite{doi:10.1063/1.2919546}.

\end{document}